\documentclass[aps,prl,twocolumn,10pt,nofootinbib]{revtex4-2}

\usepackage[T1]{fontenc}
\usepackage[utf8]{inputenc}
\usepackage{amsmath,amssymb,amsfonts,amsthm,bm,mathtools}
\usepackage{braket}
\usepackage{dsfont}
\usepackage{graphicx}
\usepackage{comment}
\usepackage{booktabs}
\newtheorem{theorem}{Theorem}
\usepackage{makecell}
\usepackage{bbm}
\usepackage{physics}
\usepackage{titletoc}
\usepackage[colorlinks,allcolors=blue]{hyperref}

\begin{document}

\title{How fast can a quantum gate be? Exact speed limits from geometry}

\author{Hunter Nelson}
\author{Edwin Barnes}
\affiliation{Department of Physics, Virginia Tech, Blacksburg, VA 24061, USA}
\affiliation{Virginia Tech Center for Quantum Information Science and Engineering, Blacksburg, VA 24061, USA}

\begin{abstract}
The speed of quantum evolution is limited under finite energy resources. While most quantum speed limits (QSLs) are formulated in terms of quantum states, they can be extended to the evolution operator itself, and thus impose fundamental limits on how quickly logical gate operations can be implemented on a quantum computer. Here, we derive a general, tight QSL that holds for any unitary evolution under the constraint that the spectral width of the Hamiltonian is bounded. We apply this result to obtain QSLs for several standard quantum gates, including Hadamard, CNOT, and Toffoli gates, finding that the QSL can vary significantly across different gates, including ones with the same entangling power. These findings can be understood geometrically using the Space Curve Quantum Control formalism, which maps unitary evolution to space curves in Euclidean space. In this formalism, the problem of finding QSLs is recast as the problem of finding minimal-length curves obeying a curvature bound. We find that time-optimal gates map to helices of varying dimensions, and that QSLs can be understood from the perspective of a bottleneck principle in which the operator that evolves the slowest governs the minimal gate time. 
\end{abstract}

\maketitle

\textbf{\textit{Introduction}}.
It has been known since the 1940s that the speed of quantum evolution is bounded when constraints on the Hamiltonian are imposed~\cite{deffner2017}. Most quantum speed limits (QSLs) that have been discovered to date are \emph{state}-based \cite{aharonov1961time,lloyd2000ultimate,pfeifer1993fast,giovannetti2003quantum,campaioli2018tightening,naseri2024quantum}, bounding how quickly a chosen initial state can evolve into an orthogonal or distinguishable state. Canonical examples include the Mandelstam--Tamm bound, governed by energy variance~\cite{mandelstam1945uncertainty}, and the Margolus--Levitin bound, governed by average energy above the ground state~\cite{margolus1998maximum}. These bounds were later unified~\cite{levitin2009fundamental}. 

QSLs for the time evolution operator itself were subsequently discovered. Existing formulations typically combine (i) a metric or distance on $SU(n)$ with (ii) an upper bound on an induced speed functional, often derived from a norm constraint on the Hamiltonian \cite{deffner2017,farmanian2024quantum,impens2025approaching,poggi2019geometric,russell2017geometry}. Such bounds are broadly applicable, but often not tight in practice. Moreover, they rarely identify \emph{which} dynamical obstruction is responsible for the minimal time required to implement a given target unitary. These issues are especially important for quantum information technologies, where finite coherence times penalize slow logic gates, while longer evolution times increase control errors, accumulated noise, and calibration drift. These limitations motivate the search for tight QSLs for unitaries, as well as a deeper understanding of their origin.    

In this Letter, we derive an \emph{exact and saturable} QSL for unitary evolution subject to a bound on the spectral width of the Hamiltonian. We obtain a closed-form minimal time $T_{\star,G}$ for realizing any gate $G{\in} SU(n)$ in the fully controllable setting in which the spectral-width bound is the only restriction imposed on the Hamiltonian. This spectral-width constraint is a natural choice given that wave function amplitudes evolve according to energy differences.  We obtain this QSL using the Space Curve Quantum Control (SCQC) formalism \cite{barnes2022dynamically,buterakos2021geometrical,zeng2018fastest}, which maps unitary evolution to space curves in Euclidean space. In SCQC, the gate time equals the curve length, while the spectral-width constraint translates to a bound on the curve’s curvature, recasting the search for QSLs into the problem of finding the shortest curves in Euclidean space that satisfy certain boundary conditions while obeying the curvature bound. We obtain QSLs for many of the most commonly used logic gates in quantum computing, such as the Hadamard, CNOT, and Toffoli gates, for which the minimal times are $T_{\star,\mathrm{Hadamard}}{=}T_{\star,\mathrm{CNOT}}{=}T_{\star,\mathrm{Toffoli}}{=}\pi/\Omega_\mathrm{max}$, where $\Omega_\mathrm{max}$ is the spectral-width bound. When viewed as space curves, the speed-limited evolutions that produce these gates correspond to helices of varying dimensions. We find that QSLs are governed by a \emph{bottleneck principle} in which the minimal time is set by the operator(s) that evolve the slowest. Our results provide the minimal possible time in which a target unitary can be realized and serve as a baseline for QSLs that arise when further restrictions are imposed on the Hamiltonian.

\textbf{\textit{Geometry of optimal evolutions}}.
We consider the time-optimal synthesis of a target gate $G {\in} SU(n)$ under a physically motivated energy resource constraint that we discuss below. Unitary evolution obeys $i\dot U(t){=}H(t)U(t)$, $U(0){=}I$, where $I$ is the identity, and $H(t)$ is fully controllable aside from the single resource constraint. General approaches to time-optimal gate synthesis include the quantum brachistochrone formulation \cite{wang2015quantum, carlini2005quantum,koike2022quantum}, the Pontryagin maximum-principle and Riemannian geometry formulations \cite{khaneja2001time,khaneja2002sub,garon2013time,van2017robust,boscain2021introduction}, and related geometric constructions \cite{nielsen2006optimal,nielsen2005geometric,nielsen2006quantum,gu2008quantum,dowling2006geometry}, which search for geodesics on $SU(n)$ defined by a speed functional of the Hamiltonian.

\begin{figure}[t]
\includegraphics[scale=.26]{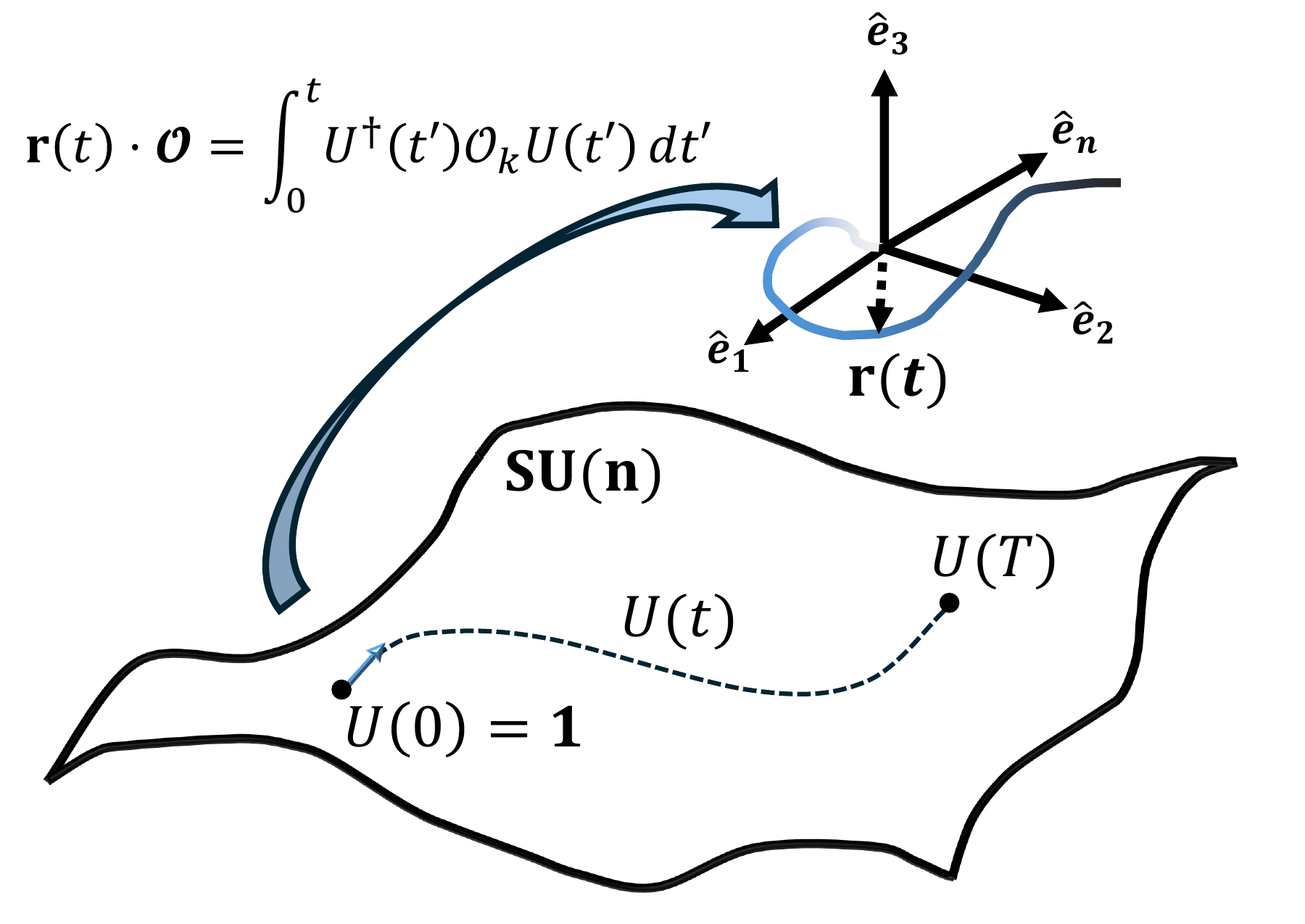}
\caption{Paths on the manifold $\mathbf{SU}(n)$ emanating from the identity and terminating at three different gates $U_1$, $U_2$, and $U_3$. A space curve in SCQC represents these paths in a Euclidean space $\mathbb R^{n^2-1}$, whose dimension equals the number of group generators.} \label{fig:manifold}
\end{figure}

We adopt a dramatically different approach based on the SCQC formalism, which maps unitary dynamics to curves in the Euclidean operator space $\mathfrak{su}(n) {\simeq} \mathbb R^{n^2-1}$ which is equipped with the Hilbert--Schmidt inner product $\|A\|{=}\sqrt{\mathrm{Tr}(A^\dagger A)}$. We further fix an orthonormal basis $\bm{\mathcal O}{=}(\mathcal O_1,\ldots,\mathcal O_{n^2{-}1})$ of $\mathfrak{su}(n)$ which allows us to write any traceless Hermitian operator
$A {\in}\mathfrak{su}(n)$ as a coordinate vector $\mathbf a{\in}\mathbb R^{n^2{-}1}$ via
$A{=}\bm{\mathcal O} {\cdot} \mathbf a$. Then for a unit-norm observable $\mathcal O$, we define a space curve $ \mathbf r_{\mathcal O}(t)$ as
\begin{equation}
\bm{\mathcal O}\cdot \mathbf r_{\mathcal O}(t)=\int_0^t dt'\,U^\dagger(t')\,\mathcal O\,U(t').
\label{eq:curve_def}
\end{equation}
Different choices of $\mathcal O$ generate different curves. The curve's arc length equals the total evolution time $T$, as can be seen by first differentiating both sides of Eq.~(\ref{eq:curve_def}) and observing that $\|\bm{\mathcal O} {\cdot} \dot{\mathbf r}_{\mathcal O}(t)\|{=}1$, $\forall t$. As a consequence, the curve has unit speed ($ds{=}\|\dot{\mathbf r}_{\mathcal O}\|dt{=}dt$), and its arc length equals the evolution time: $L{=}\int_0^T\|\bm{\mathcal O}{\cdot}\dot{\mathbf r}_{\mathcal O}(t)\|\,dt{=}T$.
An important geometric property of a curve is its curvature, which is the rotation rate of the tangent curve $\dot{\mathbf{r}}_{\mathcal O}(t)$: $\kappa_{\mathcal O}(t){=}\|\ddot{\mathbf{r}}_\mathcal{O}\|{=}\|\bm{\mathcal{O}} {\cdot} \ddot{\mathbf{r}}_{\mathcal O}\|
{=}\|[H(t),\mathcal O]\|$. Further details about SCQC can be found in Sec.~\ref{Subsupp:1.1 SCQC and Frenet-Serret equations}. 

The problem of finding the QSL for a gate $G$ is thus equivalent to the task of finding the shortest space curves whose final tangent points satisfy $\bm{\mathcal{O}}{\cdot}\dot{\mathbf{r}}_{\mathcal O}(T){=}U^\dagger(T)\,\mathcal O\,U(T){=}G^\dagger \mathcal O\,G$. However, the problem is not meaningful unless we impose a physical restriction on how strongly we drive the system. Options include bounding the norm of the Hamiltonian $||H(t)||$, or the individual drives that appear within it, but such bounds are generally not tight (see Sec.~\ref{Subsupp:1.3 Why the spectral width of H provides tighter QSLs that |H|}). Here, we instead choose to bound the instantaneous spectral-width $w(H(t)){:=}E_{\max}(H(t)){-}E_{\min}(H(t)){\le} \Omega_{\max}$, where $E_{\max}(H(t))$ and $E_{\min}(H(t))$ are the largest and smallest instantaneous eigenvalues of $H(t)$. We choose this resource constraint for three reasons: (i) It bounds the maximal time rate of change of wave function coefficients, or equivalently, evolution operator components (see Sec.~\ref{Subsupp:1.3 Why the spectral width of H provides tighter QSLs that |H|}). (ii) It is invariant to nonphysical overall energy shifts, $H(t){\to} H(t){+}\lambda(t)I$. (iii) It directly relates to the curvature formula via
$\|[H(t),\mathcal{O}]\|{\le} w(H(t))$, yielding the curvature bound $\kappa_{\mathcal O}(t){\le} w(H(t)){\le} \Omega_{\max}$.
Therefore, in SCQC this resource constraint becomes a \emph{turning-rate bound} on $\dot{\mathbf r}_\mathcal{O}(t)$, i.e., the tangent curve lives on the unit sphere and cannot rotate faster than $\Omega_{\max}$. 

As a demonstration of this bound, consider an operator $\mathcal{O}$ and target gate $G$ such that $\dot{\mathbf{r}}_\mathcal{O}(0){=}\mathbf a$ and $\dot{\mathbf r}_\mathcal{O}(T){=}\mathbf b$. The shortest curve that interpolates between these boundary conditions is a great-circle arc in the two-dimensional (2d) $\mathbf{a}$--$\mathbf{b}$ plane that saturates the spectral-width bound at all times: $\|\ddot{\mathbf r}(t)\|{=}\Omega_{\max}$. The length of this arc is $T_{2\text{d}}
{=}\arccos(\mathbf{a} \cdot \mathbf{b})/\Omega_{\max}$, and thus $T {\ge} T_{2\text{d}}$, 
which is tight provided $\mathcal{O}$ can indeed generate such an arc. For single-qubit gates this is always possible as any $U{\in} SU(2)$ can be written in the standard form:
$U{=}e^{-i(\theta/2)\,\hat{\mathbf n}{\cdot}\boldsymbol{\sigma}}$, whereby under $w(H){\le}\Omega_{\max}$ one may choose
$H{=}(\Omega_{\max}/2)\,\hat{\mathbf n}{\cdot}\boldsymbol{\sigma}$.
Any operator $\mathcal{O}{=}\hat{\mathbf{n}}'{\cdot}\boldsymbol{\sigma}$ with $\hat{\mathbf{n}}{\perp}\hat{\mathbf{n}}'$  undergoes a uniform planar rotation with curvature $\Omega_{\max}$, achieving $T{=}T_{\text{2d}}$. For example, if $\mathcal{O}{=}Z$ (so that $\mathbf{a}{=}\hat{z}$) and $G{=}X$, then $\mathbf{b}{=}{-}\hat{z}$, and the QSL is $T_{\star,X}{=}T_\mathrm{2d}{=}\pi/\Omega_\mathrm{max}$. For a Hadamard gate, the rotation axis is $\hat{\mathbf{n}}{=}(\hat{x}{+}\hat{z})/\sqrt{2}$. Choosing $\mathcal{O}{=}(X{-}Z)/\sqrt{2}$ so that $\hat{\mathbf{n}}'{\cdot}\hat{\mathbf{n}}{=}0$, we have $\mathbf{b}{=}{-}\mathbf{a}{=}{-}\hat{\mathbf{n}}'$, and therefore $T_{\star,\mathrm{Hadamard}}{=}T_{2 \mathrm d}{=}\pi/\Omega_{\max}$.

However, general multi-qubit gates do not always saturate this planar bound. This is because the SCQC tangent is constrained to remain on the adjoint orbit of $\mathcal O$.
For example, if $\mathcal O{=}ZZ$ the following constraint holds: $(U^\dagger ZZ U)^2{=}II$, $\forall t$, so the tangent cannot follow an arbitrary great-circle path in operator space. To make this concrete, 
consider CNOT, which maps the Pauli string $ZZ$ to $IZ$.
If the tangent could rotate in the $ZZ{-}IZ$ plane, i.e., $U^\dagger(t)\,ZZ\,U(t){=}\alpha(t)\,ZZ{+}\beta(t)\,IZ$, then the minimal time would be $T_{2\text{d}}{=}\pi/(2\Omega_{\max})$.  However, such an evolution is impossible since
$(\alpha ZZ+\beta IZ)^2{=}(\alpha^2+\beta^2)II+2\alpha\beta\,ZI\neq II$
along any continuous interpolation from $(\alpha,\beta){=}(1,0)$ to $(0,1)$.
Hence implementing $ZZ\mapsto IZ$ forces $\dot{\mathbf{r}}_{ZZ}(t)$ out of the plane, and similarly for $\mathbf{r}_{ZZ}(t)$. We can interpret this failure as resulting from the fact that the great circles connecting the identity to CNOT do not reside within Pauli planes in the space of 2-qubit operators. This begs the question of whether there exist other choices of $\mathcal{O}$ for which $\mathbf{r}_\mathcal{O}(t)$ does lie within a plane.

\textbf{\textit{Certifying sets, bottleneck operators, and a general quantum speed limit}}.
In order to answer this question, we first make the key observation that the space curve $\mathbf{r}_{\mathcal O}(t)$ associated with any single operator $\mathcal{O}$ does not uniquely determine $U(t)$ in general, as can be seen from the fact that any other evolution operator $U'(t){=}V(t)U(t)$, where $[V(t),\mathcal{O}]{=}0$, generates the same curve. Therefore, we need to consider a set of space curves $\{\mathbf{r}_{\mathcal O}(t)|\mathcal O{\in}\mathcal{G}\}$ generated by a set of operators $\mathcal G{\subset}\mathfrak{su}(n)$ to uniquely specify a particular $U(t)$ up to a global phase. We refer to any such set $\mathcal{G}$ as a \emph{certifying set} of operators for $U(t)$. For instance, $\mathcal G$ may be a full operator basis of $\mathfrak{su}(n)$, or a set of Pauli strings for multi-qubit systems. In Sec.~\ref{Subsupp:1.5 Two operators suffice to determine a unitary up to a global phase}, we prove that there always exist certifying sets containing only two operators, $|\mathcal{G}|{=}2$, and that these are the smallest possible sets. Any evolution operator that passes through $G$ must satisfy the simultaneous tangent endpoint constraints $\bm{\mathcal{O}}{\cdot}\dot{\mathbf{r}}_{\mathcal O}(T){=}U^\dagger(T)\,\mathcal O\,U(T){=}G^\dagger \mathcal O\,G{=}\text{Ad}_G(\mathcal O)$, $\forall\,\mathcal O{\in}\mathcal G$. Since all constraints must hold at the same final time, the gate duration obeys
\begin{align}
T\ \ge\ T_{\star,G}=\max_{\mathcal O\in\mathcal G}\ T_{\min}\!\big(\mathcal O {\to} \text{Ad}_G(\mathcal O)\big),
\label{eq:bottleneck_bound}
\end{align}
where $T_{\min}(\mathcal O {\to} \text{Ad}_G(\mathcal O))$ is the minimum time needed to evolve $\mathcal{O}$ to $\text{Ad}_G(\mathcal O)$. Thus, the QSL is determined by the slowest operator in the certifying set; we refer to this as the \emph{bottleneck principle}.  

We now derive an explicit formula for the QSL of any gate $G$ by computing the bottleneck bound, Eq.~\eqref{eq:bottleneck_bound}, for a specific certifying set: $\mathcal{G}{=}\{\mathcal{P}_{jk}\}{\cup}\{\mathcal{Q}_{jk}\}$, where $\mathcal{P}_{jk}{=}\tfrac{1}{\sqrt{2}}(\dyad{j}{k}{+}\dyad{k}{j})$ and $\mathcal{Q}_{jk}{=}{-}\tfrac{i}{\sqrt{2}}(\dyad{j}{k}{-}\dyad{k}{j})$, where $\ket{k}$ is an eigenstate of $G$ with eigenvalue $e^{i\phi_k}$, and $j{\ne} k$. In Sec.~\ref{Subsupp:1.6 Certifying set for deriving the quantum speed limit}, we prove that this is a certifying set for any evolution operator $U(t)$. (In fact, $\{\mathcal{P}_{jk}\}$ is by itself a certifying set for $n{\ge}3$.) We have $G^\dagger\mathcal{P}_{jk}G{=}\cos(\phi_j{-}\phi_k)\mathcal{P}_{jk}{+}\sin(\phi_j{-}\phi_k)\mathcal{Q}_{jk}$, and similarly for $G^\dagger\mathcal{Q}_{jk}G$. The shortest curve that has the boundary conditions $\dot{\mathbf{r}}_{\mathcal{P}_{jk}}(0){=}\mathcal{P}_{jk}$ and $\dot{\mathbf{r}}_{\mathcal{P}_{jk}}(T){=}G^\dagger\mathcal{P}_{jk}G$ is a great-circle arc in the plane spanned by $\mathcal{P}_{jk}$ and $\mathcal{Q}_{jk}$, and it is evident that the tangent must rotate by a net angle of $|\phi_j{-}\phi_k|$ along this curve. If we assume for the moment that there exists a Hamiltonian that can generate this arc, and that the spectral-width bound is saturated throughout this evolution, $\|\ddot{\mathbf{r}}_{\mathcal{P}_{jk}}(t)\|{=}\Omega_\mathrm{max}$, then it follows that $T_\mathrm{min}(\mathcal{P}_{jk}{\to}\mathrm{Ad}_G(\mathcal{P}_{jk})){=}|\phi_j{-}\phi_k|/\Omega_\mathrm{max}$. From Eq.~\eqref{eq:bottleneck_bound}, the QSL time $T_{\star,G}$ is obtained by maximizing this result over all values of $j$ and $k$:
\begin{align}
T\ge T_{\star,G}:=\frac{\Delta\phi_{\star,G}}{\Omega_{\max}} \label{eq:Tstar},
\end{align}
where $\Delta\phi_{\star,G}{=}\phi_\mathrm{max}{-}\phi_\mathrm{min}$, where $\phi_\mathrm{max}$ and $\phi_\mathrm{min}$ are the largest and smallest of the $\phi_k$, respectively. Note that, because the $\phi_k$ are unique only up to shifts by multiples of $2\pi$, we should always choose these shifts such that $\Delta\phi_{\star,G}$ is minimized.
Equation ~\eqref{eq:Tstar} is the central result of this paper. It is saturated by a constant Hamiltonian $H_{\star,G}
{:=}{-}\sum_k\tfrac{\phi_k}{T_{\star,G}} \dyad{k}{k}$, so that $e^{-iT_{\star,G} H_{\star,G}}{=}G$ up to a global phase. Defining $U(t){=}e^{-it H_{\star,G}}$, we have $U^\dagger(t)\mathcal{P}_{jk}U(t)=\cos((\phi_j{-}\phi_k)t/T_{\star,G})\mathcal{P}_{jk}+\sin((\phi_j{-}\phi_k)t/T_{\star,G})\mathcal{Q}_{jk}$, validating the assumption that there always exists a Hamiltonian that generates great-circle arc evolutions.

\begin{figure}[t] \includegraphics[scale=.3]{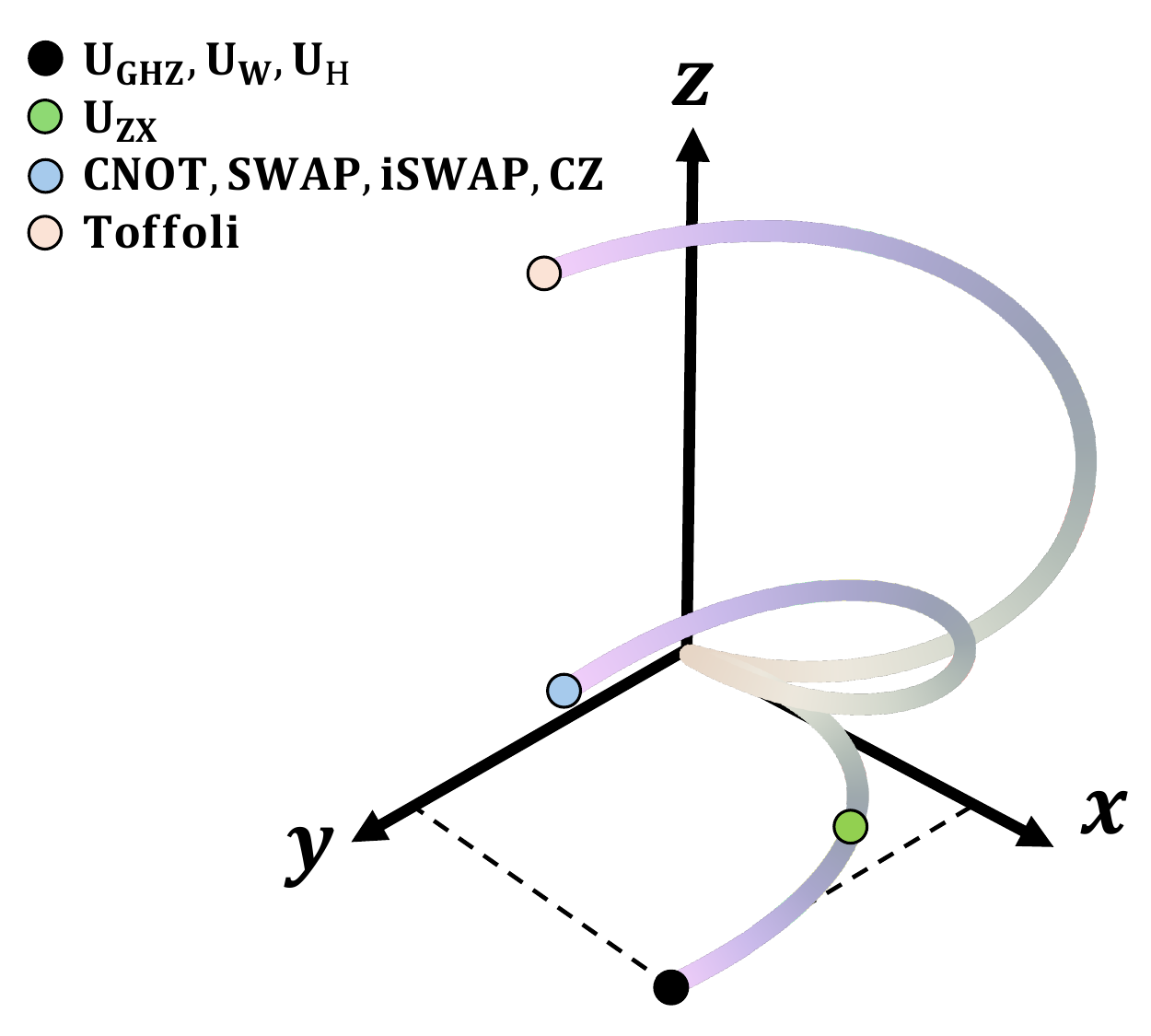} \caption{Time-optimal space curves generated by Pauli certifiers and corresponding to several commonly used gates in quantum information science. Extra curve dimensions represent longer QSL times for gates. For instance the curve realizing CNOT is a circular helix which necessarily dips into a third dimension. This leads to a longer QSL over the locally equivalent  $\text{U}_{\text{ZX}}$ gate, which admits a planar realization.} \label{fig:helices} \end{figure}

\textbf{\textit{Geometry of QSLs in the Pauli basis}}. 
The proof of the general QSL in Eq.~\eqref{eq:Tstar} shows that all time-optimal quantum evolutions that are constrained only by the spectral-width bound can be viewed geometrically as planar arcs. However, this is only true for certain certifying sets---we saw earlier that for the CNOT gate, certifying sets based on Pauli strings do not correspond to planar curves. To better understand these non-planar cases, we can use the time-optimal Hamiltonian $H_{\star,G}$.

Returning to the CNOT example, a time-optimal realization consistent with Eq.~\eqref{eq:Tstar} is given by: $U(t){=}e^{-itH_{\star,\mathrm{CNOT}}}$ with $H_{\star,\mathrm{CNOT}}{=}\frac{\Omega}{2}(ZI{+}IX{-}ZX)$,
from which we obtain $\mathrm{CNOT}{=}U(\pi/2\Omega)$ up to a global phase.
Here the tangent formed from $ZZ$ can be written in terms of a constant offset plus a uniform rotation in a single 2d plane: $U^\dagger(t)\,ZZ\,U(t)
=\frac{1}{2}(ZZ{+}IZ)
+\frac{1}{2}[\cos(2\Omega t)(ZZ{-}IZ)+\sin(2\Omega t)(ZY{-}IY)]$,
showing that the space curve is a 3d helix rather than a planar arc. The spectral width is $w(H_{\star,\mathrm{CNOT}}){=}2\Omega$, which exactly equals the tangent's rotation rate in the active plane. For a CNOT, the tangent must complete a $\pi$ rotation in the active plane,
$2\Omega T{=}\pi$, and saturating the width bound, $w(H_{\star,\mathrm{CNOT}}){=}2\Omega{=}\Omega_{\max}$, gives the minimal time: $T_{\star,\mathrm{CNOT}}{=}\pi/\Omega_{\max}
{>}\pi/(2\Omega_{\max}){=}T_{2\mathrm d}$. Note that this lower bound is above the minimal time of $\sqrt{3}/\Omega_\mathrm{max}$ reported in Ref.~\cite{farmanian2024quantum} for CNOT and, thus, tighter. Even though the tangent turning rate saturates the spectral-width bound, the fact that the curve is forced into a third dimension leads to a longer gate time. Also notice that if we remove one of the single-qubit terms from the Hamiltonian and instead consider $H_{\star}{=}\tfrac{\Omega}{2}(IX{-}ZX)$, which generates a gate locally equivalent to CNOT, then this time there exist Pauli strings that generate planar curves, e.g., $U^\dagger(t)YZU(t){=}\cos(\Omega t)YZ{+}\sin(\Omega t)YY$. However, the observable $ZZ$ still maps to a 3d helix, and the minimal time is still $T_\star{=}\pi/\Omega_\mathrm{max}$. Hence, even when some certifying observables admit planar motion, the overall gate time is controlled by the slowest observable as in Eq.~\eqref{eq:bottleneck_bound}.

Although CNOT does not saturate the planar bound, there are locally equivalent gates that do. For instance, the unitary $\text{U}_{ZX}(t){=}e^{-i\Omega tZX/2}$ is locally equivalent to CNOT at $t{=}\pi/(2\Omega)$, yet this time $\text{U}_{ZX}^\dagger(t)ZZ \text{U}_{ZX}(t){=}\cos(\Omega t)ZZ{+}\sin(\Omega t)IY$ is planar, and the minimal time equals the planar bound: $T_{\star,\text{U}_{ZX}}{=}T_{2\mathrm{d}}{=}\pi/(2\Omega_\mathrm{max})$.
Thus, the spectral-width-based QSL is not invariant under local gates in general.

\begin{table}[t]
\centering
\small
\setlength{\tabcolsep}{10.5pt}
\renewcommand{\arraystretch}{1.05}
\begin{tabular}{@{}lccc@{}}
\toprule
Gate $G$ & $\Delta\phi_{\star,G}$ & $T_{\star,G}$ & Geometry \\
\midrule
$\text{U}_{ZX}$ & $\pi/2$ & $\pi/(2\Omega_{\max})$ & circular arc\\
\makecell[l]{$\text{U}_{\text{H}}$, $\text{U}_{\mathrm{GHZ}}$,\\ $\text{U}_{\mathrm{W}}$} & $\pi$ & $\pi/\Omega_{\max}$ & circular arc\\
\makecell[l]{CNOT, CZ,\\ SWAP, iSWAP} & $\pi$ & $\pi/\Omega_{\max}$ & 3d helix\\
Toffoli & $\pi$ & $\pi/\Omega_{\max}$ & 3d helix\\
$\text{U}_{4\text{d}}$ & $3\pi/2$ & $3\pi/(2\Omega_{\max})$ & 4d helix\\
\bottomrule
\end{tabular}
\caption{Minimal gate times $T_{\star,G}$ under the spectral-width bound $w(H){\le}\Omega_{\max}$ and corresponding space curve geometries under Pauli certifiers.}
\label{tab:qsl_gates}
\end{table}

Based on these observations, we can organize gates into classes based on their QSL time $T_\star$ and the SCQC geometry induced by a width-saturating time-optimal generator $H_{\star}$ under a chosen certifying set. Because conjugation by a constant basis-change unitary acts as an orthogonal transformation in operator space, it rigidly rotates the space curve without changing its shape or length. Therefore, for classification purposes it is sufficient to work in the eigenstate basis of $H_{\star}$.

We begin with the commonly used two-qubit Clifford gates $\mathrm{CZ}$, $\mathrm{SWAP}$, $\mathrm{iSWAP}$, and $\mathrm{CNOT}$ under Pauli certifiers. These gates share the same QSL:
$T_\star{=}\pi/\Omega_{\max}$, while their bottleneck Pauli certifiers correspond to 3d helices. 
However, other multi-qubit gates with three-qubit entangling capabilities such as
$\text{U}_{\mathrm{GHZ}}{=}\exp\![i\frac{\pi}{2\sqrt{2}}(XXX{+}ZII)]$ and
$\text{U}_{\mathrm{W}}{=}\exp\![i\frac{\pi}{2\sqrt{3}}(XYZ{+}YZX{+}ZXY)]$
have the same spectral width as the Hadamard gate,
$\text{U}_{\text{H}}{=}\exp\![i\frac{\pi}{2\sqrt{2}}(X{+}Z)]$, namely $\Delta\phi_\star{=}\pi$, and hence
$T_\star=\pi/\Omega_{\max}$. In particular, for $\text{U}_{\text{H}}$, $\text{U}_{\text{GHZ}}$, and $\text{U}_{\text{W}}$, the diagonal frame of $H_{\star}$ consists of a single diagonal Pauli string, so any Pauli certifier that anticommutes with $H_{\star}$ produces a circular-arc space curve. The Toffoli gate achieves the same $T_\star$, but its bottleneck Pauli certifiers generate a distinct 3d helix (see Fig.~\ref{fig:helices}).
Finally, we consider the gate $\text{U}_{4\mathrm d}{=}\exp\!\big[i\frac{\pi}{4}(2ZI+IZ)\big]$, which has Pauli certifiers that evolve as a 4d helix, leading to the largest $T_\star$ among our examples: 
$T_{\star}{=}3\pi/(2\Omega_{\max})$. Representative space curves for all these gates are depicted in Fig.~\ref{fig:helices}.

Other certifier choices lead to different geometric descriptions. In particular, choosing certifiers $\mathcal{P}_{jk}$, $\mathcal{Q}_{jk}$ built from the energy eigenbasis of $H_{\star,G}$ yields purely planar circular-arc motion, as we saw above. This representation is depicted in Fig.~\ref{fig:circ_arcs}, where gates that require the same tangent rotation have longer QSLs due to smaller curvatures or, equivalently, a larger radius. Further discussion of these examples can be found in Sec.~\ref{Subsupp:2.2 Two qubit constructions}, and Sec.~\ref{Subsupp:2.3 Three qubit constructions} of the SM.

\textbf{\textit{Additional Hamiltonian constraints}}.
The general bound $T{\ge}T_{\star,G}{=}\Delta\phi_{\star,G}/\Omega_{\max}$ holds for any Hamiltonian subject to the spectral-width constraint. Most physical systems have additional structure that further constrains the Hamiltonian. SCQC maps such constraints into geometric restrictions on space curves, thereby providing a sharp and constructive way to quantify and diagnose the increase in minimal gate time beyond $T_\star$ due to these additional constraints.

\begin{figure}[t]
\includegraphics[scale=.3]{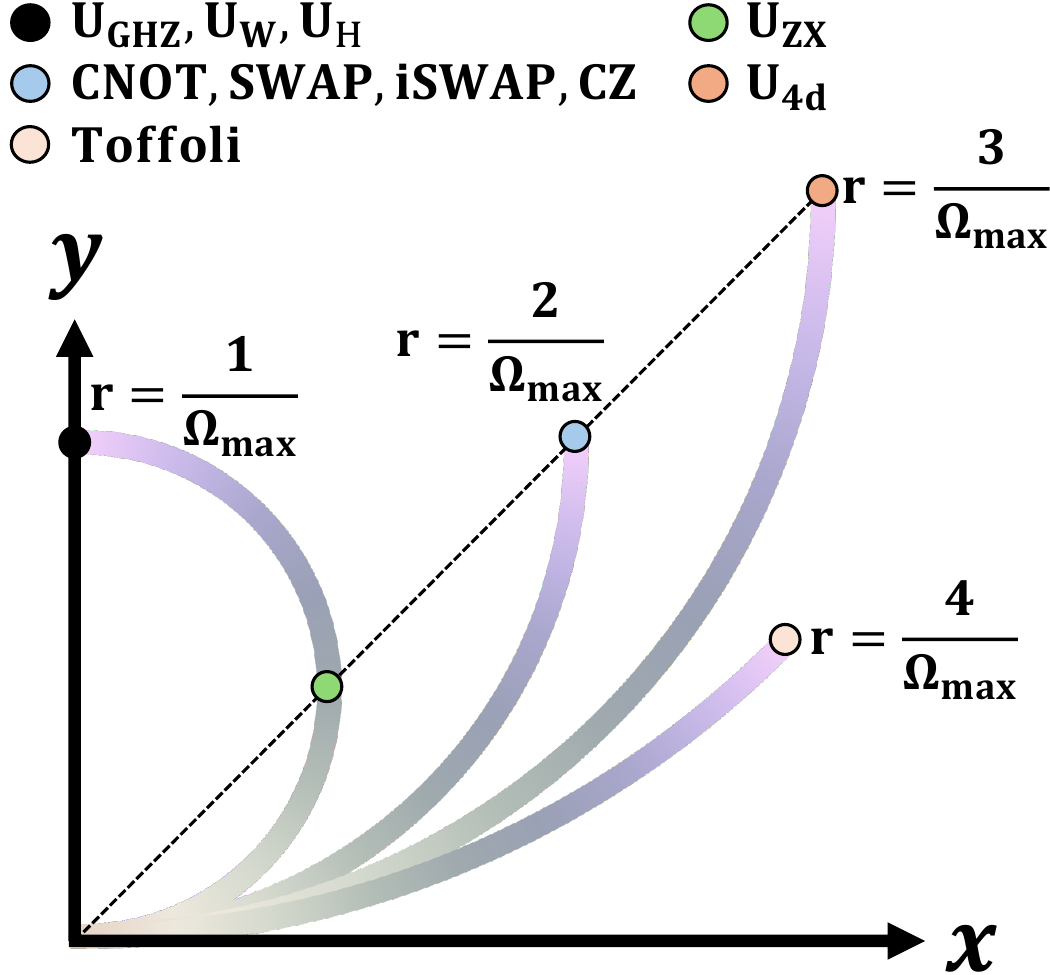}
\caption{Circular arc space curves generated by non-Pauli certifiers constructed from the eigenbasis of the optimal Hamiltonian. Here, for $\text{U}_{\text{ZX}}$, CNOT, and $\text{U}_{\text{4d}}$ a larger QSL geometrically corresponds to a circular arc with a larger radius for the same depicted tangent-endpoint angle of $\pi/4$.}
\label{fig:circ_arcs}
\end{figure}

To make this concrete, let $\mathcal{H} {\subset} \mathfrak{su}(n)$ denote the set of admissible Hamiltonians $H(t)$, and define the physically achievable minimal gate time as
$T_{{\mathcal H},G}{:=}\min\{T{:}\exists\,H(t){\in}\mathcal H,\ U(T){=}G\}$.
We then define the overhead factor:
$\eta_{{\mathcal H},G}{:=}T_{{\mathcal H},G}/T_{\star,G}{\ge} 1$.
While computing $T_{{\mathcal H},G}$ exactly is generally hard, every unitary that realizes the gate must satisfy the simultaneous constraints
$U^\dagger(T)\mathcal O\,U(T){=}G^\dagger\mathcal O\,G$ for all observables in a certifying set $\mathcal G$. Consequently:
\begin{equation}
T_{{\mathcal H},G}\ \ge\ \max_{\mathcal O\in\mathcal G}\ T_{\mathcal H,\min}\!\big(\mathcal O \to \text{Ad}_G(\mathcal O)\big),
\label{eq:arch_bottleneck}
\end{equation}
where $T_{\mathcal H,\min}(\mathcal O{\to}\text{Ad}_G(\mathcal O))$ is the minimal time required for a single observable to reach its target value
under $H(t){\in}\mathcal H$ and $w(H(t)){\le}\Omega_{\max}$. Thus, the bottleneck principle continues to govern minimal gate times in the case of additional Hamiltonian constraints.

This suggests a constructive diagnostic procedure. Given a target gate $G$ and an allowed set of Hamiltonians $\mathcal H$, first compute 
$T_{\star,G}{=}\Delta\phi_{\star,G}/\Omega_{\max}$. Next, choose a certifying set $\mathcal G$ and identify a bottleneck observable $\mathcal O_{\rm bottleneck}$.
Then check whether there exists an allowed Hamiltonian $H(t){\in}{\mathcal H}$ such that the tangent curve associated with $\mathcal O_{\rm bottleneck}$ remains in a single 2d plane.
If not, then it must be that $\eta_{{\mathcal H},G}{>}1$.
This procedure isolates the origin of the increased gate time, thereby providing insights into how to improve gate speeds.

\textbf{\textit{Conclusion.}}
We presented a general, tight lower bound on the time needed to implement a target unitary for Hamiltonians that are constrained only by their spectral width. We employed Space Curve Quantum Control to show that quantum evolutions that saturate this bound correspond to minimal-length space curves in Euclidean space that obey an upper bound on their curvature. We computed this minimal time for several of the most commonly used logical gate operations, including Hadamard, CNOT, SWAP, and Toffoli, finding that gates acting on different numbers of qubits can share the same speed limit, while gates with the same entangling power can have speed limits that differ. The corresponding space curves inhabit two or more dimensions, depending on the target unitary and on which set of observables are used to define them. In the case of Pauli observables, we found that target unitaries subject to the same quantum speed limit can correspond to space curves of differing dimensionality. Our results shed light on the ultimate speed with which quantum operations can be implemented, and they serve as a starting point for calculating minimal gate times when additional constraints on the Hamiltonian are present or when additional goals, such as noise cancellation, are desired.

\section*{Acknowledgments}

This work is supported by the Office of Naval Research, grant no. N00014-25-1-2125.

\bibliography{refs}

\clearpage
\onecolumngrid

\begin{center}
{\LARGE\bfseries Supplemental Material\par}
\vspace{0.5em}
\end{center}

\setcounter{secnumdepth}{2}
\setcounter{tocdepth}{2}

\setcounter{section}{0}
\setcounter{subsection}{0}
\setcounter{equation}{0}
\setcounter{figure}{0}
\setcounter{table}{0}

\renewcommand{\thesection}{S\arabic{section}}
\renewcommand{\thesubsection}{S\arabic{section}.\arabic{subsection}}
\renewcommand{\theequation}{S\arabic{equation}}
\renewcommand{\thefigure}{S\arabic{figure}}
\renewcommand{\thetable}{S\arabic{table}}

\makeatletter
\renewcommand{\p@subsection}{}
\renewcommand{\p@subsubsection}{}
\makeatother

\makeatletter
\renewcommand{\@seccntformat}[1]{\csname the#1\endcsname\quad}
\makeatother

\contentsmargin{2.2em}

\titlecontents{section}
  [0pt]
  {\addvspace{0.4em}}
  {\makebox[2.8em][l]{\thecontentslabel}}
  {}
  {\titlerule*[0.5pc]{.}\contentspage}

\titlecontents{subsection}
  [2em]
  {}
  {\makebox[3.6em][l]{\thecontentslabel}}
  {}
  {\titlerule*[0.5pc]{.}\contentspage}

\startcontents[supplement]
\printcontents[supplement]{}{1}{\setcounter{tocdepth}{2}}

\section{Space curve quantum control and quantum speed limits}
\label{Supp:1 SCQC and QSLs}
The space curve quantum control framework leverages the correspondence between quantum evolution via unitary dynamics and geometric space curves in Euclidean space \cite{barnes2022dynamically, buterakos2021geometrical, zeng2018fastest}. Historically, this framework has been used to study dynamically corrected gates. However, the framework is useful in other contexts as well and here we focus on its use in the study of quantum speed limits and gate classification. In this framework space curve lengths equal the evolution time, while resource bounds relate to geometric invariants of the space curve such as curvature. Finally, the curve is uniquely specified by a Frenet-Serret frame, which is a set of orthonormal vectors that evolve along the space curve. The boundary conditions of this frame fix the gate that is obtained at the final time. 

\subsection{SCQC and the Frenet-Serret equations}
\label{Subsupp:1.1 SCQC and Frenet-Serret equations}

Fix an orthonormal basis $\bm{\mathcal O}=(\mathcal O_1,\ldots,\mathcal O_{n^2-1})$ of $\mathfrak{su}(n)$ with $\frac{1}{n}\text{Tr}(\mathcal O_i\mathcal O_j)=\delta_{ij}$.
For any observable $\mathcal O_k$ with $\|\mathcal O_k\|=1$, define the SCQC curve:
\begin{align}
\bm{\mathcal O}\cdot \mathbf r_k(t)=\int_0^t dt'\,U^\dagger(t')\,\mathcal O_k\,U(t'),
\qquad
\bm{\mathcal O}\cdot \dot{\mathbf r}_k(t)=U^\dagger(t)\mathcal O_k U(t).
\label{eq:SI_SCQC_def}
\end{align}
Unitary conjugation preserves $\|\cdot\|$, hence $\|\dot{\mathbf r}_k(t)\|=1$ and $\mathrm{length}[\mathbf r_k]=T$. The space curve itself is uniquely determined by its generalized curvatures via the Frenet-Serret equations. These equations track the dynamics of an orthonormal frame of basis vectors that evolve along the space curve. This construction provides a basis-independent way to quantify how many operator-space dimensions are activated by a given certifying set of observables under the optimal generator of a chosen gate. This in turn dictates what quantum speed limit class a gate belongs to.

The Frenet-Serret frame for a constant generator can be constructed as follows. Start with the orthonormal basis element $\mathcal{O}_k$ associated with the space curve $\mathbf{r}(t)$, and define: $\mathcal F_1 \equiv \mathcal O_k$. The Frenet-Serret frame can then be constructed through the following recursion:
\begin{align}
\widetilde{\mathcal F}_{j+1}&= i[H,\mathcal F_j]+\kappa_{j-1}\mathcal F_{j-1},
\qquad \kappa_0\equiv 0,\qquad \mathcal F_0\equiv 0,\\
\kappa_j&=\|\widetilde{\mathcal F}_{j+1}\|,
\qquad
\mathcal F_{j+1}=\widetilde{\mathcal F}_{j+1}/\kappa_j,
\label{eq:SI_Lanczos}
\end{align}
until termination ($\kappa_\ell=0$). This process is equivalent to the Gram-Schmidt process but for the set of linearly dependent operators:
\begin{align}
\mathcal K(\mathcal O;H)
:=\mathrm{span}\Big\{\mathcal O,\ (i\,\mathrm{ad}_{H})\mathcal O,\ (i\,\mathrm{ad}_{H})^2\mathcal O,\ldots\Big\}
\end{align}
Then $\{\mathcal F_1,\ldots,\mathcal F_\ell\}$ is an orthonormal moving frame on the adjoint orbit of
$\mathcal O$, and the adjoint dynamics closes as:
\begin{align}
\frac{d}{dt}
\begin{pmatrix}
U^\dagger\mathcal F_1U\\
U^\dagger\mathcal F_2U\\
\vdots\\
U^\dagger\mathcal F_\ell U
\end{pmatrix}
=
\begin{pmatrix}
0&\kappa_1& & & \\
-\kappa_1&0&-\kappa_2& & \\
&\kappa_2&0&\ddots& \\
& &\ddots&\ddots&-\kappa_{\ell-1}\\
& & &\kappa_{\ell-1}&0
\end{pmatrix}
\begin{pmatrix}
U^\dagger\mathcal F_1U\\
U^\dagger\mathcal F_2U\\
\vdots\\
U^\dagger\mathcal F_\ell U
\end{pmatrix}.
\label{eq:SI_FS_general}
\end{align}
Here, the tangent curve is given by $\bm{\mathcal{O}}\cdot \dot{\mathbf{r}} = U^\dagger\mathcal F_1U$ and $\kappa_j$ are the generalized curvatures of the space curve. For constant $H$, the $\{\kappa_j\}$ are constant. Here the dimensionality of the geometry depends on both the Hamiltonian and observable corresponding to the tangent curve:
\begin{equation}
\ell(\mathcal O;H):=\dim\mathcal K(\mathcal O;H).
\end{equation}
For any nontrivial constant Hamiltonian $H$ there always exists a choice of $\mathcal{O}$ with $\ell=2$. For a constant $H$, and tangent curve $\bm{\mathcal O} \cdot \dot{\bm{r}}=U^\dagger(t)\mathcal O U(t)$ the dynamics can be written: $\bm{\mathcal O} \cdot\ddot{\bm{r}}(t)=A\bm{\mathcal O} \cdot\dot{\bm{r}}(t)$ with:
\begin{align}
A_{jl}=\frac{i}{n} \text{Tr}\left( [H,\mathcal O_j]\mathcal O_l \right),
\label{eq:SI_Adef}
\end{align}
where $A$ is constant, real, and antisymmetric. A well known mathematical result is that for any such $A$ there always exists $\Lambda\in SO(n^2-1)$ such that:
\begin{align}
A=\Lambda\left(\bigoplus_{p=1}^m
\begin{pmatrix}
0 & \kappa_p\\
-\kappa_p & 0
\end{pmatrix}
\oplus 0\right)\Lambda^T,
\label{eq:SI_Ablocks}
\end{align}
where in the adapted basis, each active $2\times2$ block describes a circular arc with curvature $\kappa_p$ \cite{horn2012matrix}.
Therefore we can always re-describe space curves with constant generalized curvatures, i.e. higher dimensional helices defined by Eq. (\ref{eq:SI_Lanczos}), as a direct sum of orthogonal circular arcs. 

The observables that yield such circular arc tangent curves can be constructed by starting with the Hamiltonian in the energy eigenbasis: $H=\sum_a E_a\ket{a}\bra{a}$ with $\sum_a E_a=0$, and defining the standard $\mathfrak{su}(n)$ basis:
\begin{align}
X_{ab}=\frac{1}{\sqrt{2}}\big(\ket{a}\bra{b}+\ket{b}\bra{a}\big),\qquad
Y_{ab}=\frac{-i}{\sqrt{2}}\big(\ket{a}\bra{b}-\ket{b}\bra{a}\big),\qquad (a<b).
\end{align}
Then
\begin{align}
[H,X_{ab}]=i(E_a-E_b)Y_{ab},\qquad
[H,Y_{ab}]=-i(E_a-E_b)X_{ab},
\label{eq:SI_XYblocks}
\end{align}
so each pair $(X_{ab},Y_{ab})$ forms a rotating $2$-plane with rate $\kappa_{p}=|E_a-E_b|$. This tells us that the spectral gaps are directly related to the curvatures of these circular arcs. Therefore:
\begin{align}
\max_{p}\kappa_{p}=\max_{p}|E_a-E_b|=w(H),
\label{eq:SI_gap_width}
\end{align}
and under the Hamiltonian width bound, all planar curvatures satisfy $\kappa_{p}\le \Omega_{\max}$.

\subsection{A commutator bound implied by spectral width}
\label{Subsupp:1.2 A commutator bound implied by spectral width}

In this section we prove the inequality used in the main text that connects our chosen resource bound to the curvature of the space curve. For Hermitian $H$ and any operator $\mathcal O$:
\begin{align}
\|[H,\mathcal O]\| \le w(H)\,\|\mathcal O\|,
\label{eq:SI_comm_bound}
\end{align}
where $\|\cdot\|$ denotes the (normalized) Hilbert--Schmidt norm $\|A\|=\sqrt{\mathrm{Tr}(A^\dagger A)}$.

\begin{proof}
Begin by diagonalizing $H=\sum_a E_a\ket{a}\bra{a}$. Next express the matrix element of $[H,\mathcal{O}]$ in the eigenbasis of $H$. This reads $([H,\mathcal O])_{ab}=(E_a-E_b)\mathcal O_{ab}$ from which it follows that:
\begin{align}
\|[H,\mathcal O]\|^2
&=\frac{1}{n}\sum_{a,b}|E_a-E_b|^2\,|\mathcal O_{ab}|^2
\le \Big(\max_{a,b}|E_a-E_b|\Big)^2 \frac{1}{n}\sum_{a,b}|\mathcal O_{ab}|^2 \nonumber\\
&= w(H)^2\,\|\mathcal O\|^2.
\end{align}
The desired expression is then obtained by taking the square root of both sides.
\end{proof}

This result then connects to SCQC by recognizing $\kappa_{\mathcal O}(t)=\|[H(t),\mathcal O]\|$ for $\|\mathcal O\|=1$ which gives the bound quoted in the main text: 
\begin{align}
\kappa_{\mathcal O}(t)\le w(H(t))\le \Omega_{\max}.
\label{eq:SI_kappa_bound}
\end{align}

\subsection{Why the spectral width of $H$ provides tighter QSLs than $\|H\|$}
\label{Subsupp:1.3 Why the spectral width of H provides tighter QSLs that |H|}

The spectral width $w(H)$ provides a better measure of the rate of quantum dynamics compared to, e.g., the Hilbert-Schmidt norm $\|H\|$. One issue with the latter is that it is not invariant under nonphysical overall energy shifts $H(t)\to H(t)+\lambda(t)I$. Such shifts lead to irrelevant global phases in evolved states, or equivalently, in the evolution operator, and so have no physical impact on the quantum dynamics. One could avoid this issue by instead considering the Hilbert-Schmidt norm of the \emph{centered} Hamiltonian $\widetilde{H}(t)$, which is defined to be the same as $H(t)$ but with the overall energy shifted such that the instantaneous spectrum is centered around zero:
\begin{equation}
\widetilde H(t)=H(t)-\frac{E_{\max}(H(t))+E_{\min}(H(t))}{2}I,
\end{equation}
However, $\|\widetilde{H}(t)\|$ still fails to accurately capture the rate of quantum dynamics. To see this, consider the following example: $H=\widetilde{H}=\epsilon Z\otimes\mathbbm{1}_{n-1}$, where $\mathbbm{1}_{n-1}$ is the $(n-1)\times(n-1)$ identity operator. We have
\[
w(H)=2\epsilon,\qquad \|H\|=\sqrt{n}\epsilon.
\]
Even though the dynamics is only occurring within a single two-level subsystem, $\|H\|$ depends on the size of the entire Hilbert space and thus does not provide a faithful measure of the rate of dynamics. In contrast, $w(H)$ does not depend on the full Hilbert space dimension.

To see why $w(H)$ is the correct measure of the rate of quantum dynamics, we will show that the time derivative of any matrix element of $\widetilde U(t)$ (the evolution operator generated by the centered Hamiltonian $\widetilde{H}(t)$) is bounded above by $w(H)$. Consider the Schr\"odinger equation for $\widetilde{U}(t)$:
\begin{equation}
\frac{d}{dt}\widetilde U(t) = -i\,\widetilde H(t)\widetilde U(t),
\qquad
\widetilde U(0)=I.
\end{equation}
Fix basis vectors $\ket{a}$ and $\ket{b}$. Define the matrix element
\begin{equation}
\widetilde U_{ab}(t)=\mel{a}{\widetilde U(t)}{b}.
\end{equation}
Differentiating and using the Schr\"odinger equation for $\widetilde U(t)$ gives
\begin{equation}
\frac{d}{dt}\widetilde U_{ab}(t)
=
\mel{a}{\frac{d}{dt}\widetilde U(t)}{b}
=
-i\,\mel{a}{\widetilde H(t)\widetilde U(t)}{b}.
\end{equation}
Therefore,
\begin{equation}
\left|\frac{d}{dt}\widetilde U_{ab}(t)\right|
=
\left|\mel{a}{\widetilde H(t)\widetilde U(t)}{b}\right|.
\end{equation}
Apply the Cauchy--Schwarz inequality:
\begin{equation}
\left|\mel{a}{\widetilde H(t)\widetilde U(t)}{b}\right|
\le
\|\widetilde H(t)\widetilde U(t)\ket{b}\|.
\end{equation}
Since $\widetilde U(t)$ is unitary, $\|\widetilde U(t)\ket{b}\|=1$. Hence
\begin{equation}
\|\widetilde H(t)\widetilde U(t)\ket{b}\|
\le
\|\widetilde H(t)\|_\mathrm{op},
\end{equation}
where $\|\cdot\|_\mathrm{op}$ denotes the operator norm.
Combining the last two inequalities gives
\begin{equation}
\left|\frac{d}{dt}\widetilde U_{ab}(t)\right|
\le
\|\widetilde H(t)\|_\mathrm{op}.
\end{equation}
Since $\|\widetilde H(t)\|_\mathrm{op}=w(H)/2$ for a centered Hamiltonian, we have
\begin{equation}
\left|\frac{d}{dt}\mel{a}{\widetilde U(t)}{b}\right|
\le \frac{w(H(t))}{2}.
\end{equation}

\subsection{Gate certification and bottleneck observables}
\label{Subsupp:1.4 Gate certification and bottleneck observables}
A set $\mathcal G\subset \mathfrak{su}(n)$ is called \emph{certifying} if knowledge of $\text{Ad}_U(\mathcal O)=U^\dagger \mathcal O U$ for all $\mathcal O\in\mathcal G$ determines $U$ uniquely up to global phase.
For example, $\mathcal G$ may be a full operator basis of $\mathfrak{su}(n)$. For a fixed target $G$, any implementing protocol must satisfy the following endpoint constraints:
\begin{align}
U^\dagger(T)\mathcal O\,U(T)=G^\dagger \mathcal O\,G,\qquad \forall\,\mathcal O\in\mathcal G.
\label{eq:SI_cert_constraints}
\end{align}
Fix a certifying set $\mathcal G$.
Let $T_{\min}(\mathcal O\to \text{Ad}_G(\mathcal O))$ be the minimal time required under $w(H(t))\le \Omega_{\max}$
to satisfy the single constraint $U^\dagger(T)\mathcal O U(T)=\text{Ad}_G(\mathcal O)$.
Then any protocol implementing $G$ must satisfy:
\begin{align}
T \ge \max_{\mathcal O\in \mathcal G} T_{\min}(\mathcal O\to \text{Ad}_G(\mathcal O)).
\label{eq:SI_max_over_O}
\end{align}
This follows as all constraints \eqref{eq:SI_cert_constraints} must be satisfied simultaneously at the same final time $T$.
In particular, $T$ must be at least as large as the minimal time required for each individual constraint;
taking the maximum yields \eqref{eq:SI_max_over_O}.

\subsection{Two operators suffice to determine a unitary up to a global phase}
\label{Subsupp:1.5 Two operators suffice to determine a unitary up to a global phase}

Let $U(t) \in SU(n)$ and let
\[
\mathcal{O}_k(t)=U(t)^\dagger \mathcal{O}_k U(t) ,
\]
for fixed operators $\mathcal{O}_k \in \mathfrak{su}(n)$. We show that for every $n\ge 2$, there exist two operators $\mathcal{O}_1,\mathcal{O}_2$ such that the pair
\[
U(t)^\dagger \mathcal{O}_1 U(t),\; U(t)^\dagger \mathcal{O}_2 U(t)
\]
determines $U(t)$ uniquely up to a global phase.

\begin{theorem}
For every $n\ge 2$, there exist $\mathcal{O}_1,\mathcal{O}_2 \in \mathfrak{su}(n)$ such that if $U(t),V(t) \in SU(n)$ satisfy
\[
U(t)^\dagger \mathcal{O}_j U(t) = V(t)^\dagger \mathcal{O}_j V(t), \qquad j=1,2,
\]
then
\[
V(t) = \omega U(t)
\]
for some $\omega \in \mathbb{C}$ with $\omega^n=1$. In particular, $U(t)$ is determined up to a global phase. Moreover, one operator never suffices.
\end{theorem}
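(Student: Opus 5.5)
Reduce to $W:=V(t)U(t)^\dagger\in SU(n)$. The hypothesis $U^\dagger\mathcal O_jU=V^\dagger\mathcal O_jV$ is equivalent to $W\mathcal O_jW^\dagger=\mathcal O_j$, i.e.\ $[W,\mathcal O_1]=[W,\mathcal O_2]=0$. So it suffices to choose $\mathcal O_1,\mathcal O_2\in\mathfrak{su}(n)$ whose commutant contains only scalars. Then $W=\omega I$, and $\det W=1$ gives $\omega^n=1$.

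For the construction, take $\mathcal O_1=\sum_a \lambda_a\ket{a}\bra{a}$ diagonal in a fixed basis, traceless, with pairwise distinct real $\lambda_a$ (for example $\lambda_a=a-\tfrac{n+1}{2}$), normalized. Any $W$ commuting with $\mathcal O_1$ preserves each one-dimensional eigenspace, so $W=\mathrm{diag}(w_1,\ldots,w_n)$. Next take $\mathcal O_2=\sum_{a=1}^{n-1}\mathcal P_{a,a+1}$, the real symmetric tridiagonal operator built from the $\mathcal P_{jk}$ of the main text (it is traceless and Hermitian), normalized. For diagonal $W$ we have $[W,\mathcal O_2]_{a,a+1}=(w_a-w_{a+1})/\sqrt2$. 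Vanishing of the commutator therefore forces $w_1=w_2=\cdots=w_n$, so $W=\omega I$. The graph with edges $(a,a+1)$ is connected, and this is what propagates the equality through all indices. For $n=2$ this choice reduces to $\mathcal O_1\propto Z$ and $\mathcal O_2\propto X$, which gives a useful sanity check.

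For the claim that one operator never suffices, let $\mathcal O\in\mathfrak{su}(n)$ be arbitrary and consider $W_s=e^{is\mathcal O}$. It commutes with $\mathcal O$, and it lies in $SU(n)$ because $\mathcal O$ is traceless. Setting $V=W_sU$ gives $V^\dagger\mathcal O V=U^\dagger\mathcal O U$. If $\mathcal O\neq0$, then $\mathcal O$ has at least two distinct eigenvalues, so for small $s\neq0$ the operator $W_s$ is not scalar. Small $s$ is needed because $W_s$ must avoid the finitely many central elements $\omega I$, and the identity is isolated among them. If $\mathcal O=0$, every $V$ works. In both cases $V\neq\omega U$.

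The main obstacle is only bookkeeping, namely confirming the equivalence $U^\dagger\mathcal O U=V^\dagger\mathcal O V\iff[VU^\dagger,\mathcal O]=0$. The step most likely to need care is the one-operator claim, where we must make sure the non-scalar commuting element lies in $SU(n)$ rather than merely $U(n)$. The construction $e^{is\mathcal O}$ with traceless $\mathcal O$ handles this automatically.
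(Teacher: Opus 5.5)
Your proof is correct and follows the paper's argument essentially step for step. It uses the same reduction to $W=VU^\dagger$ commuting with both operators, the same nondegenerate diagonal $\mathcal O_1$ and nearest-neighbour tridiagonal $\mathcal O_2$ (differing only by normalization), and the same determinant step giving $\omega^n=1$. Your minimality argument via $W_s=e^{is\mathcal O}\in SU(n)$ is slightly more careful than the paper's: the paper only observes that the commutant of a single matrix is larger than the scalars, without explicitly exhibiting a non-scalar element of $SU(n)$ in it.
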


\begin{proof}
Work in the standard basis of $\mathbb{C}^n$. Define
\[
\mathcal{O}_1=\operatorname{diag}(1,2,\dots,n)-\frac{n+1}{2}I,
\]
and
\[
\mathcal{O}_2=\sum_{k=1}^{n-1}(E_{k,k+1}+E_{k+1,k}),
\]
where $E_{ab}$ denotes the matrix with a $1$ in entry $(a,b)$ and zeros elsewhere. Both $\mathcal{O}_1$ and $\mathcal{O}_2$ are Hermitian and traceless.

We first compute their common commutant. Let $A$ be a unitary matrix such that
\[
[A,\mathcal{O}_1]=[A,\mathcal{O}_2]=0.
\]
Since $\mathcal{O}_1$ has distinct eigenvalues, any matrix commuting with $\mathcal{O}_1$ must be diagonal in this basis:
\[
A=\operatorname{diag}(\lambda_1,\dots,\lambda_n).
\]
Now impose $[A,\mathcal{O}_2]=0$. For each $k=1,\dots,n-1$,
\[
AE_{k,k+1}=\lambda_k E_{k,k+1}, \qquad
E_{k,k+1}A=\lambda_{k+1}E_{k,k+1},
\]
so
\[
[A,E_{k,k+1}] = (\lambda_k-\lambda_{k+1})E_{k,k+1}.
\]
Thus $[A,\mathcal{O}_2]=0$ implies
\[
\lambda_k=\lambda_{k+1} \qquad \text{for all } k,
\]
hence
\[
\lambda_1=\cdots=\lambda_n=:\lambda.
\]
Therefore $A=\lambda I$. So the common commutant of $\mathcal{O}_1$ and $\mathcal{O}_2$ consists only of scalar matrices.

Now suppose
\[
U(t)^\dagger \mathcal{O}_j U(t) = V(t)^\dagger \mathcal{O}_j V(t),\qquad j=1,2.
\]
Set
\[
W(t)=V(t)U(t)^\dagger.
\]
Then
\[
W(t)^\dagger \mathcal{O}_j W(t) = \mathcal{O}_j,\qquad j=1,2,
\]
so $W(t)$ commutes with both $\mathcal{O}_1$ and $\mathcal{O}_2$. By the previous paragraph,
\[
W(t)=\lambda I
\]
for some $|\lambda|=1$. Therefore
\[
V(t)=\lambda U(t).
\]
Since $U(t),V(t) \in SU(n)$,
\[
1=\det(V(t))=\det(\lambda U(t))=\lambda^n \det(U(t))=\lambda^n,
\]
so $\lambda^n=1$. Hence $V(t)=\omega U(t)$ with $\omega$ in the center of $SU(n)$, proving uniqueness up to a global phase.

Finally, one operator never suffices for $n>1$. Indeed, the commutant of a single matrix is always larger than the scalars: if the matrix has a simple spectrum, every diagonal matrix in its eigenbasis commutes with it; if it has degeneracies, the commutant is even larger. Thus two operators are minimal.
\end{proof}

\subsection{Certifying set for deriving the quantum speed limit}
\label{Subsupp:1.6 Certifying set for deriving the quantum speed limit}

Let $\{|k\rangle\}_{k=1}^n$ be an orthonormal basis of $\mathbb{C}^n$. For $j\neq k$, define
\[
\mathcal{P}_{jk}
:=\frac{1}{\sqrt{2}}\bigl(|j\rangle\langle k|+|k\rangle\langle j|\bigr),
\qquad
\mathcal{Q}_{jk}
:=-\frac{i}{\sqrt{2}}\bigl(|j\rangle\langle k|-|k\rangle\langle j|\bigr).
\]
We show that the time-evolved operators
\[
U(t)^\dagger \mathcal{P}_{jk} U(t),
\qquad
U(t)^\dagger \mathcal{Q}_{jk} U(t),
\qquad j\neq k,
\]
determine $U(t)$ uniquely up to a global phase. The proof is similar to that in the previous section.

\begin{theorem}
Let $U,V\in U(n)$. Suppose that for all $j\neq k$,
\[
U^\dagger \mathcal{P}_{jk} U = V^\dagger \mathcal{P}_{jk} V,
\qquad
U^\dagger \mathcal{Q}_{jk} U = V^\dagger \mathcal{Q}_{jk} V.
\]
Then there exists $\lambda\in \mathbb{C}$ with $|\lambda|=1$ such that
\[
V=\lambda U.
\]
In particular, the family $\{\mathcal{P}_{jk},\mathcal{Q}_{jk}\}_{j\neq k}$ determines $U$ up to a global phase. If moreover $U,V\in SU(n)$, then $\lambda^n=1$.
\end{theorem}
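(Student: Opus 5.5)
Following the template of the preceding theorem, I will set $W=VU^\dagger$, which is unitary. The hypotheses $U^\dagger \mathcal{P}_{jk}U=V^\dagger\mathcal{P}_{jk}V$ and $U^\dagger \mathcal{Q}_{jk}U=V^\dagger\mathcal{Q}_{jk}V$ rearrange to $W^\dagger\mathcal{P}_{jk}W=\mathcal{P}_{jk}$ and $W^\dagger\mathcal{Q}_{jk}W=\mathcal{Q}_{jk}$. So $W$ commutes with every $\mathcal{P}_{jk}$ and $\mathcal{Q}_{jk}$, and the task reduces to showing that the common commutant of this family is the scalars.

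The key step is to pass from the Hermitian combinations to the matrix units themselves. Since
\[
\mathcal{P}_{jk}+i\mathcal{Q}_{jk}=\frac{1}{\sqrt{2}}\bigl(|j\rangle\langle k|+|k\rangle\langle j|\bigr)+\frac{1}{\sqrt{2}}\bigl(|j\rangle\langle k|-|k\rangle\langle j|\bigr)=\sqrt{2}\,|j\rangle\langle k|,
\]
commutation is linear, so $W$ commutes with every off-diagonal unit $E_{jk}=|j\rangle\langle k|$, $j\neq k$.

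Next I will write out $[W,E_{jk}]=0$ entrywise: $WE_{jk}=\sum_a W_{aj}|a\rangle\langle k|$ and $E_{jk}W=\sum_b W_{kb}|j\rangle\langle b|$. Comparing the $(a,k)$ entries with $a\neq j$ gives $W_{aj}=0$, so column $j$ of $W$ is supported only on row $j$. Every index $j$ admits some $k\neq j$ once $n\ge 2$, so $W$ is diagonal. Comparing the $(j,k)$ entries then gives $W_{jj}=W_{kk}$, so $W=\lambda I$, and unitarity forces $|\lambda|=1$. Hence $V=\lambda U$. If in addition $U,V\in SU(n)$, taking determinants gives $1=\det V=\lambda^n\det U=\lambda^n$, exactly as in the previous proof.

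There is no real obstacle here. The only point needing care is the observation that the $\mathcal{P}$ and $\mathcal{Q}$ operators together generate all off-diagonal matrix units, which the complex combination above handles. Because the basis $\{|k\rangle\}$ is arbitrary, the result covers in particular the eigenbasis of $G$ used in the main text.
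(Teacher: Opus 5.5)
Your proposal is correct and follows essentially the same route as the paper's proof. Both set $W=VU^\dagger$, use $\mathcal{P}_{jk}\pm i\mathcal{Q}_{jk}$ to reduce to commutation with all off-diagonal matrix units $E_{jk}$, and compare the $(a,k)$ and $(j,k)$ entries to get $W=\lambda I$; both then finish with the determinant argument.
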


\begin{proof}
For $j\neq k$, let
\[
E_{jk}:=|j\rangle\langle k|.
\]
From the definitions,
\[
\mathcal{P}_{jk}+i\mathcal{Q}_{jk}
=
\frac{1}{\sqrt{2}}(E_{jk}+E_{kj})
+\frac{1}{\sqrt{2}}(E_{jk}-E_{kj})
=
\sqrt{2}\,E_{jk}.
\]
Similarly,
\[
\mathcal{P}_{jk}-i\mathcal{Q}_{jk}
=
\sqrt{2}\,E_{kj}.
\]
Hence the collection $\{\mathcal{P}_{jk},\mathcal{Q}_{jk}\}_{j\neq k}$ is equivalent to the collection of all off-diagonal matrix units $\{E_{jk}\}_{j\neq k}$.

Now assume
\[
U^\dagger \mathcal{P}_{jk} U = V^\dagger \mathcal{P}_{jk} V,
\qquad
U^\dagger \mathcal{Q}_{jk} U = V^\dagger \mathcal{Q}_{jk} V
\]
for all $j\neq k$. Setting
\[
W:=VU^\dagger,
\]
we obtain
\[
W^\dagger \mathcal{P}_{jk} W = \mathcal{P}_{jk},
\qquad
W^\dagger \mathcal{Q}_{jk} W = \mathcal{Q}_{jk},
\qquad j\neq k.
\]
Therefore $W$ commutes with every $\mathcal{P}_{jk}$ and every $\mathcal{Q}_{jk}$, and hence with every $E_{jk}$ for $j\neq k$.

We now show that this forces $W$ to be a scalar multiple of the identity. Write
\[
W=(w_{ab})_{a,b=1}^n.
\]
Since $WE_{jk}=E_{jk}W$ for all $j\neq k$, compare matrix entries.

First, $W$ must be diagonal. Indeed, fixing $j\neq k$ and looking at the $(a,k)$ entry gives
\[
(WE_{jk})_{ak}=w_{aj},
\qquad
(E_{jk}W)_{ak}=\delta_{aj}w_{kk}.
\]
For $a\neq j$ this implies
\[
w_{aj}=0.
\]
Since $j$ was arbitrary, all off-diagonal entries vanish, so
\[
W=\operatorname{diag}(\lambda_1,\dots,\lambda_n).
\]

Next, all diagonal entries must be equal. For diagonal $W$,
\[
WE_{jk}=\lambda_j E_{jk},
\qquad
E_{jk}W=\lambda_k E_{jk}.
\]
Since $WE_{jk}=E_{jk}W$ for all $j\neq k$, we get
\[
\lambda_j=\lambda_k
\qquad \text{for all } j\neq k.
\]
Thus
\[
\lambda_1=\cdots=\lambda_n=:\lambda,
\]
and hence
\[
W=\lambda I.
\]
Because $W$ is unitary, $|\lambda|=1$.

Finally,
\[
V=WU=\lambda U,
\]
so $V$ differs from $U$ only by a global phase.

If $U,V\in SU(n)$, then
\[
1=\det(V)=\det(\lambda U)=\lambda^n\det(U)=\lambda^n,
\]
so $\lambda^n=1$. Thus the ambiguity is exactly the center of $SU(n)$.
\end{proof}

For $n\ge3$, the operators $\{\mathcal{P}_{jk}\}$ form a certifying set by themselves, as we now show.
\begin{theorem}
Assume $n\ge 3$. Let $U,V\in U(n)$ satisfy
\[
U^\dagger \mathcal{P}_{jk} U = V^\dagger \mathcal{P}_{jk} V
\qquad\text{for all } j\neq k.
\]
Then there exists $\lambda\in \mathbb{C}$ with $|\lambda|=1$ such that
\[
V=\lambda U.
\]
In particular, the family $\{\mathcal{P}_{jk}\}_{j\neq k}$ determines $U$ up to a global phase. If moreover $U,V\in SU(n)$, then $\lambda^n=1$.
\end{theorem}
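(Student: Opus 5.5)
Following the pattern of the previous theorem, set $W:=VU^\dagger$. The hypothesis gives $W^\dagger\mathcal{P}_{jk}W=\mathcal{P}_{jk}$ for all $j\neq k$, so $W$ lies in the commutant of the set $\{\mathcal{P}_{jk}\}_{j\neq k}$. It therefore suffices to show that, for $n\ge 3$, this commutant consists only of scalars. The conclusions $V=\lambda U$ with $|\lambda|=1$, and $\lambda^n=1$ in the $SU(n)$ case, then follow exactly as in the preceding proof via the determinant.

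The key observation is that the real symmetric matrices $S_{jk}:=E_{jk}+E_{kj}$ already generate the full matrix algebra when $n\ge3$, because products of them recover individual matrix units. For distinct $j,k,l$ we compute
\[
S_{jk}S_{kl}=(E_{jk}+E_{kj})(E_{kl}+E_{lk})=E_{jl}.
\]
The other three products vanish: $E_{jk}E_{lk}=0$ and $E_{kj}E_{lk}=0$ because $j\neq l$, and $E_{kj}E_{kl}=0$ because $j\neq k$. Thus every off-diagonal matrix unit $E_{jl}$ is obtained by choosing a third index $k\notin\{j,l\}$, which exists precisely because $n\ge 3$. If $W$ commutes with each $S_{jk}$, it commutes with their products, hence with every $E_{jl}$ for $j\neq l$. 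The argument of the preceding theorem, which shows that commuting with all off-diagonal $E_{jk}$ forces $W=\lambda I$, then applies verbatim.

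Alternatively, one can argue directly. Commuting with $S_{jk}$ means $W$ preserves the $\pm1$ eigenspaces of $S_{jk}$ and its kernel $\mathrm{span}\{|l\rangle: l\neq j,k\}$. Intersecting these invariant subspaces over different pairs isolates the one-dimensional spaces spanned by each $|l\rangle$, so $W$ is diagonal. Commuting with $S_{jk}$ then gives $\lambda_j=\lambda_k$. The product trick above is cleaner, so I will use it.

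The only real point requiring care is verifying that the product formula holds, including that the cross terms vanish, and noting where $n\ge3$ enters. For $n=2$ the argument genuinely fails: $\mathcal{P}_{12}\propto X$ is fixed by conjugation with $e^{-i\theta X}$, which is not a scalar. This explains why $\{\mathcal{Q}_{jk}\}$ was needed in the previous theorem.
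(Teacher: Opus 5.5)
Your proof is correct, but its core step differs from the paper's. Both arguments set $W=VU^\dagger$, reduce to showing that the commutant of $\{S_{jk}\}_{j\neq k}$ consists of scalars, and use $n\ge 3$ only to pick a third index.

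The paper works entrywise. For $j\neq b$ and $k\notin\{j,b\}$, the $(k,b)$ entry of $WS_{jk}=S_{jk}W$ reads $0=w_{jb}$, so $W$ is diagonal. Commuting with $S_{jk}$ then forces the diagonal entries to be equal. This argument is self-contained and does not use the preceding theorem.

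You instead use the product identity $S_{jk}S_{kl}=E_{jl}$ for distinct $j,k,l$. Commuting with the $S_{jk}$ therefore implies commuting with every off-diagonal matrix unit, and you can invoke the preceding theorem's argument unchanged. In effect, the algebra generated by the $S_{jk}$ is all of $M_n(\mathbb{C})$, since $E_{jl}E_{lj}=E_{jj}$ as well.

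What your route buys is structure. It explains why the $\mathcal{Q}_{jk}$ become redundant once $n\ge 3$: they lie in the algebra generated by the $\mathcal{P}_{jk}$, and a set and the algebra it generates have the same commutant. The cost is dependence on the earlier result, whereas the paper's computation stands alone.

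One small slip in your verification: $E_{jk}E_{lk}=\delta_{kl}E_{jk}$ vanishes because $k\neq l$, not because $j\neq l$. The conclusion is unaffected, since all three indices are distinct.
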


\begin{proof}
For $j\neq k$, let
\[
S_{jk}:=|j\rangle\langle k|+|k\rangle\langle j|,
\]
so that $\mathcal{P}_{jk}=S_{jk}/\sqrt{2}$. It is therefore enough to work with the family $\{S_{jk}\}_{j\neq k}$.

Assume
\[
U^\dagger S_{jk} U = V^\dagger S_{jk} V
\qquad\text{for all } j\neq k,
\]
and define
\[
W:=VU^\dagger.
\]
Then
\[
W^\dagger S_{jk}W=S_{jk}
\qquad\text{for all } j\neq k,
\]
so $W$ commutes with every $S_{jk}$:
\[
WS_{jk}=S_{jk}W
\qquad\text{for all } j\neq k.
\]
We show that this implies $W=\lambda I$.

Write $W=(w_{ab})_{a,b=1}^n$. Fix $j\neq b$. Since $n\ge 3$, we may choose $k$ distinct from both $j$ and $b$. Comparing the $(k,b)$ entry of $WS_{jk}=S_{jk}W$ gives
\[
(WS_{jk})_{kb}=0,
\qquad
(S_{jk}W)_{kb}=w_{jb},
\]
hence
\[
w_{jb}=0.
\]
Since $j\neq b$ were arbitrary, all off-diagonal entries of $W$ vanish. Thus $W$ is diagonal:
\[
W=\operatorname{diag}(\lambda_1,\dots,\lambda_n).
\]

Now for diagonal $W$,
\[
WS_{jk}=\lambda_j |j\rangle\langle k|+\lambda_k |k\rangle\langle j|,
\]
while
\[
S_{jk}W=\lambda_k |j\rangle\langle k|+\lambda_j |k\rangle\langle j|.
\]
Since $WS_{jk}=S_{jk}W$, we obtain
\[
\lambda_j=\lambda_k
\qquad\text{for all } j\neq k.
\]
Therefore
\[
\lambda_1=\cdots=\lambda_n=:\lambda,
\]
so
\[
W=\lambda I.
\]
Because $W$ is unitary, $|\lambda|=1$. Hence
\[
V=WU=\lambda U,
\]
which proves that $V$ differs from $U$ only by a global phase.

If $U,V\in SU(n)$, then
\[
1=\det(V)=\det(\lambda U)=\lambda^n \det(U)=\lambda^n,
\]
so $\lambda^n=1$.
\end{proof}

Note that the restriction $n\ge 3$ is essential. For $n=2$ there is only one operator,
\[
\mathcal{P}_{12}=\frac{1}{\sqrt{2}}\bigl(|1\rangle\langle 2|+|2\rangle\langle 1|\bigr),
\]
and its commutant in $U(2)$ is larger than the scalar matrices, so it does not determine $U$ up to phase.

\section{Closed-form Pauli certifier geometry}
\label{Supp:2 Closed-form Pauli certifier geometry}

In this section we discuss the space curve geometry of two and three qubits in detail under the diagonal width-saturating, time-optimal generator $H_{\star,G}$.  In this frame, the adjoint action
$\text{Ad}_{U(t)}(\cdot)=U^\dagger(t)(\cdot)U(t)$ decomposes into independent two-dimensional rotation planes on orthogonal subspaces of the operator space. Consequently, every tangent curve generated by a fixed certifier $\mathcal O$ decomposes into a sum of planar circular motions, and the associated base curve is obtained by term-by-term integration. We then produce the space curves corresponding to the generation of the gates considered in the main text. 

\subsection{Universal two-dimensional plane decomposition for tangent curves}
\label{Subsupp:2.1 Universal two-dimensional plane decomposition for tangent curves}

Fix a constant generator $H$ and $U(t)=e^{-itH}$.  In the diagonal frame of $H$, the adjoint action splits into orthogonal $2\times 2$ rotation blocks. Hence for any $\mathcal O$ one can write:
\begin{equation}
\bm{\mathcal{O}} \cdot \dot{\textbf{r}}
=
\mathcal O_\parallel
+\sum_{\kappa_p\in\mathsf S(\mathcal O;H)}
\Big[\cos(\kappa_p t)\,A_{\kappa_p}+\sin(\kappa_p t)\,B_{\kappa_p}\Big],
\label{eq:SI_Ot_template_unified}
\end{equation}
where $\mathcal O_\parallel$ commutes with $H$, $\mathsf S(\mathcal O;H)$ is the set of nonzero plane curvatures supported by
$\mathcal O$, and each $(A_{\kappa_p},B_{\kappa_p})$ is a fixed orthonormal pair spanning the corresponding two-dimensional plane.
Integrating gives the base curve:
\begin{equation}
\bm{\mathcal O}\cdot \mathbf r(t)
=
t\,\mathcal O_\parallel
+\sum_{\kappa_p\in\mathsf S(\mathcal O;H)}
\left[
\frac{\sin(\kappa_p t)}{\kappa_p}\,A_{\kappa_p}
+\frac{1-\cos(\kappa_p t)}{\kappa_p}\,B_{\kappa_p}
\right].
\label{eq:SI_Rt_template_unified}
\end{equation}
Under the spectral-width constraint $w(H)\le \Omega_{\max}$, every curvature satisfies $\kappa_p\le w(H)\le \Omega_{\max}$.
For instance if $\mathcal O=A$ lies entirely in a single plane of rate $\Omega_{\max}$, one has the circular arc tangent:
\begin{equation}
U^\dagger(t)\,A\,U(t)=\cos(\Omega_{\max} t)\,A+\sin(\Omega_{\max} t)\,B,
\label{eq:SI_arc_tangent}
\end{equation}
and the integrated circular-arc base curve is:
\begin{equation}
\bm{\mathcal O}\cdot \mathbf r(t)
=\frac{1}{\Omega_{\max}}
\Big[\sin(\Omega_{\max} t)\,A+\big(1-\cos(\Omega_{\max} t)\big)\,B\Big].
\label{eq:SI_arc_curve}
\end{equation}
At $T_{\star,G}=\Delta\phi_{\star,G}/\Omega_{\max}$, the endpoint constraint is:
\begin{equation}
\text{Ad}_G(A)=U^\dagger(T_{\star,G})\,A\,U(T_{\star,G})
=\cos(\Delta\phi_{\star,G})\,A+\sin(\Delta\phi_{\star,G})\,B,
\label{eq:SI_arc_endpoint}
\end{equation}
so $A$ is a bottleneck observable for this certifier choice.

\subsection{Two qubit constructions}
\label{Subsupp:2.2 Two qubit constructions}

Fix $G\in SU(4)$ and its width-saturating shortest-log constant optimal generator $H_{\star,G}$, and work in the diagonal frame of $H_{\star,G}$. Any traceless diagonal two-qubit Hamiltonian lies in $\mathrm{span}\{ZI,IZ,ZZ\}$, so we may take the diagonal representative:
\begin{equation}
H_{\star,G} \;=\;\frac{1}{2}\big(a_1\,ZI + a_2\,IZ + a_3\,ZZ\big),
\label{eq:SI_cartan_diag_2q}
\end{equation}
with real coefficients $(a_1,a_2,a_3)$.
Let $\mathcal P:=\{\sigma_\mu\otimes\sigma_\nu:\mu,\nu\in\{I,X,Y,Z\}\}\setminus\{II\}$ denote the two-qubit Pauli strings and define
\begin{equation}
\ell_{\max}^{\mathcal P}(G)\;:=\;\max_{\substack{P\in\mathcal P\\ [P,H_{\star,G}]\neq 0}} \,\ell\!\big(P;H_{\star,G}\big),
\label{eq:SI_lmaxP_def_2q}
\end{equation}
where $\ell(P;H_{\star,G})$ is the Frenet--Serret closure dimension of the SCQC curve generated by the certifier $P$ under $H_{\star,G}$.

Under \eqref{eq:SI_cartan_diag_2q}, the $12$ non-diagonal Pauli strings partition into three invariant $4$D blocks:
\begin{align}
B_{12}=\{XX,YY,XY,YX\},\qquad
B_{13}=\{XI,XZ,YI,YZ\},\qquad
B_{23}=\{IX,ZX,IY,ZY\}.
\end{align}
Each block $B_{jk}$ splits into two orthogonal planes with signed plane curvatures given by $\kappa_{jk}^{\pm}:=a_j\pm a_k$ and indices $1,2,3$ corresponding to $ZI,IZ$, and $ZZ$ respectively. For $B_{23}$ one convenient orthonormal choice is:
\begin{equation}
\mathcal O_{1\pm}=\frac{IY\pm ZY}{\sqrt{2}},
\qquad
\mathcal O_{2\pm}=\frac{IX\pm ZX}{\sqrt{2}},
\label{eq:SI_B23_plane_basis}
\end{equation}
for which the adjoint dynamics closes as two independent planar rotations
\begin{equation}
\frac{d}{dt}
\begin{pmatrix}
U^\dagger \mathcal O_{1\pm}U\\[2pt]
U^\dagger \mathcal O_{2\pm}U
\end{pmatrix}
=
\begin{pmatrix}
0 & \kappa_{23}^{\pm}\\
-\kappa_{23}^{\pm} & 0
\end{pmatrix}
\begin{pmatrix}
U^\dagger \mathcal O_{1\pm}U\\[2pt]
U^\dagger \mathcal O_{2\pm}U
\end{pmatrix},
\qquad U(t)=e^{-it H_{\star,G}}.
\label{eq:SI_block_rotation_example_2q}
\end{equation}
Analogous constructions hold for $B_{12}$ and $B_{13}$ by permuting $(a_1,a_2,a_3)$. Every Pauli string $P\in B_{jk}$ has simultaneous nonzero overlap with the $+$ and $-$ planes. For example, in $B_{23}$,
\begin{equation}
IX=\frac{\mathcal O_{2+}+\mathcal O_{2-}}{\sqrt2},\qquad
ZX=\frac{\mathcal O_{2+}-\mathcal O_{2-}}{\sqrt2},\qquad
IY=\frac{\mathcal O_{1+}+\mathcal O_{1-}}{\sqrt2},\qquad
ZY=\frac{\mathcal O_{1+}-\mathcal O_{1-}}{\sqrt2}.
\label{eq:SI_Pauli_overlaps_2q}
\end{equation}
Thus $\ell(P;H_{\star,G})$ is determined by whether both plane curvatures are nonzero and distinct.
For each pair $(j,k)\in\{(1,2),(1,3),(2,3)\}$ define:
\begin{equation}
\Lambda(a_j,a_k):=
\begin{cases}
4,& a_j a_k\neq 0\ \text{and}\ |a_j|\neq|a_k|,\\[2pt]
3,& a_j a_k\neq 0\ \text{and}\ |a_j|=|a_k|,\\[2pt]
2,& a_j a_k=0\ \text{and}\ (|a_j|+|a_k|)\neq 0.
\end{cases}
\label{eq:SI_Lambda_def_2q}
\end{equation}
Then for \eqref{eq:SI_cartan_diag_2q},
\begin{equation}
\ell_{\max}^{\mathcal P}(G)
=\max_{(j,k)\in\{(1,2),(1,3),(2,3)\}}
\Lambda(a_j,a_k)
\in\{2,3,4\}.
\label{eq:SI_lmaxP_closed_form_2q}
\end{equation}
Therefore, for Pauli certifiers, the space curve dimension is determined by the relationships between the coefficients $a_1,a_2,a_3$. We now look at specific examples.

For the gate $\text{U}_{\text{ZX}}$ a diagonal-frame width-saturating generator reads:
\begin{equation}
H_{\star,\text{U}_{\text{ZX}}}=\frac{\Omega}{2}\,ZZ,
\qquad
T_{\star,\text{U}_{\text{ZX}}}=\frac{\pi}{2\Omega_{\max}},
\qquad
\Delta\phi_{\star,\text{U}_{\text{ZX}}}=\frac{\pi}{2}.
\label{eq:SI_Hstar_UZX_2q}
\end{equation}
One choice of bottleneck plane is: $\{ A_{ZX},B_{ZX} \} =\{(YI+YZ)/\sqrt{2},(XZ+XI)/\sqrt{2}\}$. Then the bottleneck circular arc can be written:
\begin{align}
\bm{\mathcal O}\cdot \dot{\mathbf{r}}_{A_{ZX}}(t)&=\cos(\Omega t)\,A_{ZX}+\sin(\Omega t)\,B_{ZX}\\
\bm{\mathcal O}\cdot \mathbf r_{A_{ZX}}(t) &=\frac{1}{\Omega}\Big[\sin(\Omega t)\,A_{ZX}+\big(1-\cos(\Omega t)\big)\,B_{ZX}\Big].
\end{align}
Here, $\Omega=E_{\max}-E_{\min}=\Omega_{\max}$ therefore at $T_{\star,\text{U}_{\text{ZX}}}$ one has $\text{Ad}_G(A_{ZX})=B_{ZX}$. For this gate the block $B_{12}$ does not evolve and similar dynamics hold in block $B_{23}$.

The diagonal generator corresponding to CNOT is:
\begin{equation}
H_{\star, \text{CNOT}}=\frac{\Omega}{2}\,(-ZI-IZ+ZZ),
\qquad
T_{\star,\text{CNOT}}=\frac{\pi}{\Omega_{\max}},
\qquad
\Delta\phi_\star=\pi,
\label{eq:SI_Hstar_CNOT}
\end{equation}
which generates $\mathrm{CZ}$ at $T_{\star,\text{CNOT}}$ up to global phase. Choose
\begin{equation}
A_{\mathrm{CZ}}:=\mathcal O_{2-}=\frac{IX-ZX}{\sqrt2},
\qquad
B_{\mathrm{CZ}}:=-\mathcal O_{1-}=\frac{ZY-IY}{\sqrt2},
\label{eq:SI_AB_CZ_arc}
\end{equation}
so that the rotation direction is absorbed into $B_{\mathrm{CZ}}$ and the rate is $2\Omega$:
\begin{align}
\bm{\mathcal O}\cdot \dot{\mathbf{r}}_{A_{\mathrm{CZ}}}(t) &= \cos(2\Omega t)\,A_{\mathrm{CZ}}+\sin(2\Omega t)\,B_{\mathrm{CZ}}\\
\bm{\mathcal O}\cdot \mathbf r_{A_{\mathrm{CZ}}}(t)&=\frac{1}{2\Omega}\Big[\sin(2\Omega t)\,A_{\mathrm{CZ}}+\big(1-\cos(2\Omega t)\big)\,B_{\mathrm{CZ}}\Big].
\label{eq:SI_CZ_arc}
\end{align}
Here the width of the Hamiltonian is $w(H_{\star,\text{CNOT}})=2 \Omega$ therefore saturation requires $2\Omega=\Omega_{\max}$ and at $T_{\star,\text{CNOT}}$ this yields $\text{Ad}_{\mathrm{CZ}}(A_{\mathrm{CZ}})=-A_{\mathrm{CZ}}$, requiring a rotation angle of $\pi$. We can additionally depict this increase in speed limit by a circular arc of length $\pi$ with twice the radius which is depicted in Fig. \ref{fig:circ_arcs}.

If we instead choose a Pauli string certifier $\mathcal O=IX$ then we obtain a helix:
\begin{equation}
U^\dagger(t)\,IX\,U(t)
=\frac{1}{2}(IX+ZX)
+\frac{1}{2}\Big[\cos(2\Omega t)\,(IX-ZX)+\sin(2\Omega t)\,(IY-ZY)\Big],
\label{eq:SI_CZ_helix_tangent}
\end{equation}
and integrating gives the base curve
\begin{equation}
\bm{\mathcal O}\cdot \mathbf r(t)
=t\,\frac{IX+ZX}{2}
+\frac{1}{4\Omega}
\left[
\sin(2\Omega t)\,(IX-ZX)
+\big(1-\cos(2\Omega t)\big)\,(IY-ZY)
\right].
\label{eq:SI_CZ_helix_curve}
\end{equation}
which is the helix depicted in Fig. \ref{fig:helices}.

The last gate we consider is $\text{U}_{\text{4d}}$ which is generated by: 
\begin{equation}
H_{\star,\text{U}_{4d}}=\frac{\Omega}{2}(2ZI+IZ),
\qquad
T_{\star,\text{U}_{4\text{d}}}=\frac{3\pi}{2\Omega_{\max}},
\qquad
\Delta\phi_{\star,\text{U}_{\text{4d}}}=\frac{3\pi}{2}.
\label{eq:SI_Hstar_U4d_2q}
\end{equation}
The $B_{12}$ plane can be rewritten in the following basis:
\begin{equation}
A^{-}_{4\mathrm d}:=\frac{XX-YY}{\sqrt2},
\qquad
B^{+}_{4\mathrm d}:=-\frac{XY+YX}{\sqrt2}, \qquad A^{+}_{4\mathrm d}:=\frac{XX+YY}{\sqrt2},
\qquad
B^{-}_{4\mathrm d}:=-\frac{XY-YX}{\sqrt2}.
\label{eq:SI_AB_U4d_arc}
\end{equation}
The bottleneck arc is obtained by conjugating $A^{-}_{4d}$ by $e^{-i H_{\star,\text{U}_{4d}}t}$:
\begin{align}
\bm{\mathcal O}\cdot \dot{\mathbf{r}}_{A^{-}_{4\mathrm d}}(t)
&=\cos(3\Omega t)\,A^-_{4\mathrm d}+\sin(3 \Omega t)\,B^{+}_{4\mathrm d}\\
\bm{\mathcal O}\cdot \mathbf r_{A^-_{4\mathrm d}}(t) &=\frac{1}{3\Omega}\Big[\sin(3\Omega t)\,A^-_{4\mathrm d}+\big(1-\cos(3\Omega t)\big)\,B^+_{4\mathrm d}\Big].
\end{align}
Here the width of the Hamiltonian is $w(H_{\star,\text{U}_{4d}})=3 \Omega$ therefore saturation requires $3\Omega=\Omega_{\max}$, therefore $T_{\star,\text{U}_{\text{4d}}}$, $\Omega_{\max}T_{\star,\text{U}_{\text{4d}}}=3\pi/2$ so $\text{Ad}_G(A^{-}_{4\mathrm d})=-B^{+}_{4\mathrm d}$, and here we can depict this increase in speed limit by a circular arc of length $3\pi/2$ with 3 times the radius which is depicted in Fig. \ref{fig:circ_arcs}. 

If we instead choose the Pauli string certifier $\mathcal O=XX$, which overlaps with both the  $+$ and $-$ planes we obtain an explicit two-frequency four-dimensional motion.
Using $XX=(A^-_{4d}+A^{+}_{4d})/\sqrt2$ with $A^{+}_{4d}=(XX+YY)/\sqrt2$, one finds:
\begin{align}
U^\dagger(t)\,XX\,U(t)
&=\frac{1}{\sqrt2}\Big[\cos(3 \Omega t)\,A^{-}_{4d}+\sin(3 \Omega t)\,B^{+}_{4d} \Big]
+\frac{1}{\sqrt2}\Big[\cos(\Omega t)\,A^{+}_{4d}+\sin(\Omega t)\,B^{-}_{4d}\Big],
\label{eq:SI_U4d_two_plane_tangent}
\end{align}
and integrating gives the base curve as a sum of two circular arcs:
\begin{align}
\bm{\mathcal O}\cdot \mathbf r(t)
&=\frac{1}{\sqrt2}\left[
\frac{\sin(3 \Omega t)}{3 \Omega}\,A^{-}_{4d}+\frac{1-\cos(3\Omega t)}{3\Omega}\,B^{+}_{4d}
\right]
+\frac{1}{\sqrt2}\left[
\frac{\sin(\Omega t)}{\Omega}\,A^{+}_{4d}+\frac{1-\cos(\Omega t)}{\Omega}\,B^{-}_{4d}
\right].
\label{eq:SI_U4d_two_plane_curve}
\end{align}
which is a four-dimensional helix. 

\subsection{Three qubit constructions}
\label{Subsupp:2.3 Three qubit constructions}

For three qubits ($n=8$), the same analysis can be applied. We begin with the diagonal Hamiltonian which can be expanded as:
\begin{equation}
H=\frac{1}{2}\Big( a_1\,ZII+a_2\,IZI+a_3\,IIZ+a_4\,ZZI+a_5\,ZIZ+a_6\,IZZ+a_7\,ZZZ \Big),
\label{eq:SI_Hdiag_3q}
\end{equation}
in terms of the seven commuting diagonal Pauli strings. Under conjugation by $U(t)=e^{-itH}$, all $Z$-type strings are left invariant, while the remaining $56$ Pauli strings
partition into seven eight dimensional blocks labeled by which qubits carry an $X/Y$:
\begin{align*}
B_{\{1\}} &= \{X,Y\}\otimes\{I,Z\}\otimes\{I,Z\}\\
B_{\{2\}} &= \{I,Z\}\otimes\{X,Y\}\otimes\{I,Z\}\\
B_{\{3\}} &= \{I,Z\}\otimes\{I,Z\}\otimes\{X,Y\},\\
B_{\{1,2\}} &= \{X,Y\}\otimes\{X,Y\}\otimes\{I,Z\}\\
B_{\{1,3\}} &= \{X,Y\}\otimes\{I,Z\}\otimes\{X,Y\}\\
B_{\{2,3\}} &= \{I,Z\}\otimes\{X,Y\}\otimes\{X,Y\},\\
B_{\{1,2,3\}} &= \{X,Y\}\otimes\{X,Y\}\otimes\{X,Y\}.
\end{align*}
For instance:
\begin{equation}
B_{\{3\}}=\{IIX,ZIX,IZX,ZZX,\,IIY,ZIY,IZY,ZZY\},
\end{equation}
and only the $Z$-on-qubit-$3$ terms in $H$ contribute, namely $IIZ,ZIZ,IZZ,ZZZ$  corresponding to the coefficients $a_3,a_5,a_6,a_7$. We can define for $\mu,\nu\in\{\pm1\}$ the unit-norm combinations:
\begin{align}
\mathcal O^{\mu\nu}_{1} &= \frac{1}{2}\Big( IIY + \mu\,ZIY + \nu\,IZY + \mu\nu\,ZZY \Big),\\
\mathcal O^{\mu\nu}_{2} &= \frac{1}{2}\Big( IIX + \mu\,ZIX + \nu\,IZX + \mu\nu\,ZZX \Big),
\label{eq:SI_Omunu_def}
\end{align}
which satisfy four independent planar rotations
\begin{equation}
\frac{d}{dt}
\begin{pmatrix}
U^\dagger \mathcal O^{\mu\nu}_{1}U\\[2pt]
U^\dagger \mathcal O^{\mu\nu}_{2}U
\end{pmatrix}
=
\begin{pmatrix}
0 & \kappa_{\mu\nu}\\
-\kappa_{\mu\nu} & 0
\end{pmatrix}
\begin{pmatrix}
U^\dagger \mathcal O^{\mu\nu}_{1}U\\[2pt]
U^\dagger \mathcal O^{\mu\nu}_{2}U
\end{pmatrix},
\qquad
\kappa_{\mu\nu}=a_3+\mu\,a_5+\nu\,a_6+\mu\nu\,a_7.
\label{eq:SI_3q_B3_2plane}
\end{equation}
Analogous four-plane decompositions hold for the other eight-dimensional blocks.

Within any block $B_S$, there are four plane frequencies. A Pauli string in that block has simultaneous overlap with multiple planes
(generically all four), so the maximal Pauli-certifier closure dimension is twice the number of distinct nonzero plane frequencies which in this case is generically eight. For both the generators $K_{\mathrm{GHZ}}=(XXX+ZII)/\sqrt2$ and $K_{\mathrm{W}}=(XYZ+YZX+ZXY)/\sqrt3$  the diagonal frame of the optimal generator can be taken as:
\begin{equation}
H_{\star,\text{U}_{\text{GHZ/W}}}=\frac{\Omega}{2}\,ZII,
\qquad
T_{\star,\text{U}_{\text{GHZ/W}}}=\frac{\pi}{\Omega_{\max}},
\qquad
\Delta\phi_{\star,\text{U}_{\text{GHZ/W}}}=\pi.
\label{eq:SI_Hstar_GHZW_3q}
\end{equation}
A convenient choice of bottleneck plane is then $\mathrm{span}\{XII,YII\}$. Choose the Pauli-combination pair
\begin{equation}
A_{\mathrm{GHZ/W}}:=\frac{XII+YII}{\sqrt2},
\qquad
B_{\mathrm{GHZ/W}}:=\frac{YII-XII}{\sqrt2},
\label{eq:SI_AB_GHZW}
\end{equation}
which yields the bottleneck arc:
\begin{align}
\bm{\mathcal O}\cdot \dot{\mathbf{r}}_{A_{\mathrm{GHZ/W}}}(t) &=\cos(\Omega t)\,A_{\mathrm{GHZ/W}}+\sin(\Omega t)\,B_{\mathrm{GHZ/W}}\\
\bm{\mathcal O}\cdot \mathbf r_{A_{\mathrm{GHZ/W}}}(t)&=\frac{1}{\Omega_{\max}}\Big[\sin(\Omega t)\,A_{\mathrm{GHZ/W}}+\big(1-\cos(\Omega t)\big)\,B_{\mathrm{GHZ/W}}\Big],
\end{align}
Similar to the $\text{U}_{ZX}$ gate above, $\Omega=E_{\max}-E_{\min}$ therefore we can set $\Omega_{\max}=\Omega$ and at $T_{\star,\text{U}_{\text{GHZ/W}}}$ one has $\text{Ad}_G(A_{\mathrm{GHZ/W}})=-A_{\mathrm{GHZ/W}}$ which is a $\pi$ rotation in the plane.

For Toffoli we can work in the diagonal frame of $\mathrm{CCZ}=\mathrm{diag}(1,1,1,1,1,1,1,-1)$ (locally equivalent to Toffoli).
A width-saturating traceless diagonal optimal generator is:
\begin{equation}
H_{\star,\mathrm{CCZ}}
=\frac{\Omega}{2}\Big(-ZII-IZI-IIZ+ZZI+ZIZ+IZZ-ZZZ\Big),
\qquad
T_{\star,\text{CCZ}}=\frac{\pi}{\Omega_{\max}},
\qquad
\Delta\phi_{\star,\text{CCZ}}=\pi.
\label{eq:SI_Hstar_CCZ_3q}
\end{equation}

In block $B_{\{3\}}$, only the $(-,-)$ plane is active with $|\omega_{--}|=4\Omega$.
A planar bottleneck certifier is:
\begin{equation}
A_{\mathrm{CCZ}}:=\mathcal O^{--}_{2}
=\frac{1}{2}\big(IIX - ZIX - IZX + ZZX\big),
\qquad
B_{\mathrm{CCZ}}:=-\mathcal O^{--}_{1}
=\frac{1}{2}\big(-IIY + ZIY + IZY - ZZY\big),
\label{eq:SI_AB_CCZ}
\end{equation}
where the sign in $B_{\mathrm{CCZ}}$ is chosen so the rotation rate is $4\Omega$. Then
\begin{align}
\bm{\mathcal O}\cdot \dot{\mathbf{r}}_{A_{\mathrm{CCZ}}}(t)&=\cos(4\Omega t)\,A_{\mathrm{CCZ}}+\sin(4\Omega t)\,B_{\mathrm{CCZ}}\\
\bm{\mathcal O}\cdot \mathbf r_{A_{\mathrm{CCZ}}}(t)&=\frac{1}{4\Omega}\Big[\sin(4\Omega t)\,A_{\mathrm{CCZ}}+\big(1-\cos(4\Omega t)\big)\,B_{\mathrm{CCZ}}\Big],
\label{eq:SI_CCZ_arc}
\end{align}
and at $T_{\star,\text{CCZ}}$ one has $\text{Ad}_{\mathrm{CCZ}}(A_{\mathrm{CCZ}})=-A_{\mathrm{CCZ}}$.

If we instead choose the Pauli string $\mathcal O=IIX$ we obtain an offset+rotation helix because $IIX$ overlaps with both inactive and active planes:
\begin{equation}
IIX=\frac{1}{4}\big(3\,IIX + ZIX + IZX - ZZX\big)
+\frac{1}{4}\big(IIX - ZIX - IZX + ZZX\big)
\label{eq:SI_IIX_decomp_CCZ}
\end{equation}
Thus the tangent takes the helix form
\begin{equation}
U^\dagger(t)\,IIX\,U(t)
=\mathcal O_\parallel+\frac{1}{2}(\cos(4\Omega t)\,A_{\text{CCZ}}+\sin(4\Omega t)\,B_{\text{CCZ}}),
\label{eq:SI_CCZ_helix_tangent}
\end{equation}
and the integrated base curve is:
\begin{equation}
\bm{\mathcal O}\cdot \mathbf r(t)
=t\,\mathcal O_\parallel
+\frac{1}{8\Omega}
\Big[\sin(4 \Omega t)\,A_{\text{CCZ}}+\big(1-\cos(4\Omega t)\big)\,B_{\text{CCZ}}\Big].
\label{eq:SI_CCZ_helix_curve}
\end{equation}
Here $4 \Omega$ is the maximum width of the Hamiltonian and therefore upon saturation $4 \Omega=\Omega_{\max}$, the required rotation angle is: $\Omega_{\max}T_{\star,\text{CCZ}} = \pi$.

\end{document}